%% file: main.tex
\pdfoutput=1

\documentclass[sigplan,10pt]{acmart}

\setcopyright{none}
\renewcommand\footnotetextcopyrightpermission[1]{}
\usepackage{booktabs}
\usepackage{array}
\usepackage{pifont}
\usepackage{microtype}
\usepackage{xurl}
\usepackage{xspace}
\usepackage{listings}
\usepackage{algorithm}
\usepackage{algpseudocode}
\algrenewcommand{\algorithmiccomment}[1]{\hfill{\scriptsize\textcolor{gray!60!black}{$\triangleright$ #1}}}
\usepackage{tikz}
\usetikzlibrary{fit, positioning, arrows.meta, shapes.geometric, decorations.pathreplacing, calc, patterns}
\usepackage{pgfplots}
\pgfplotsset{compat=1.18}

\colorlet{fmpale}{black!6}
\colorlet{fmwin}{orange!50}
\colorlet{fmreuse}{green!45!black!46!white}
\colorlet{fmexec}{black!55}

\newcommand{\sys}{DynBranch\xspace}
\newcommand{\wl}[1]{\emph{#1}}
\newcommand{\cnum}[1]{\tikz[baseline=-0.5ex]{%
  \node[draw,circle,line width=0.4pt,inner sep=0pt,minimum size=1.10em,
        font=\fontsize{0.78em}{0.78em}\selectfont]{#1};}}

\AtBeginDocument{%
  \setlength{\abovecaptionskip}{6pt}%
  \setlength{\belowcaptionskip}{0pt}%
  \setlength{\textfloatsep}{10pt plus 3pt minus 3pt}%
  \setlength{\floatsep}{10pt plus 3pt minus 2pt}%
  \setlength{\intextsep}{10pt plus 3pt minus 2pt}%
  \setlength{\dbltextfloatsep}{10pt plus 3pt minus 3pt}%
  \setlength{\dblfloatsep}{10pt plus 3pt minus 2pt}%
}

\begin{document}

\title{\sys: Speculative Subgraph Reuse for Dynamic Agentic LLM Serving}

\author{Junyi Shen}
\affiliation{%
  \institution{National University of Singapore}
  \country{Singapore}}
\email{j1shen@comp.nus.edu.sg}

\author{Noppanat Wadlom}
\affiliation{%
  \institution{National University of Singapore}
  \country{Singapore}}
\email{noppanat@comp.nus.edu.sg}

\author{Zhengyuan Su}
\affiliation{%
  \institution{National University of Singapore}
  \country{Singapore}}
\email{suzhengy@comp.nus.edu.sg}

\author{Yao Lu}
\affiliation{%
  \institution{National University of Singapore}
  \country{Singapore}}
\email{luyao@comp.nus.edu.sg}

\renewcommand{\shortauthors}{Shen et al.}

\begin{abstract}
\input{sections/abstract}
\end{abstract}

\maketitle
\pagestyle{plain}

\input{sections/intro}
\input{sections/background}
\input{sections/design}
\input{sections/implementation}
\input{sections/eval}
\input{sections/related}
\input{sections/conclusion}

\bibliographystyle{ACM-Reference-Format}
\bibliography{refs}

\appendix
\input{supplement/appendix}
\setcounter{table}{0}
\input{supplement/evidence}

\end{document}

%% file: sections/abstract.tex
Agentic LLM workflows decide their execution paths at runtime.
Downstream computation may be predictable, or may have run before, yet
it cannot begin until the model or the user resolves the branch. We call
this serialization the \emph{branch-resolution barrier}. Caching alone does
not hide it: the key that identifies a reusable result is not known
until then.

In this paper, we propose \sys{}, which makes an unresolved branch
addressable before it resolves. Its stable \emph{coordinate} lets
candidate subgraphs run during resolution and completed subgraph results be
reused across later requests. A two-level
controller admits this work when its expected benefit exceeds the
load price. \sys{} sits at the model-API boundary and
requires no changes to agent harnesses or model execution engines.

Across four agentic workloads with Qwen3-32B on 4$\times$ H200 GPUs,
\sys{} reduces mean latency by up to $32\%$ over each workload's
strongest prior system and by $46$--$66\%$ against a no-reuse floor,
while preserving workflow results. The benefit persists across backbone
families and on a commodity Qwen3-8B/RTX 4090 deployment.

%% file: sections/intro.tex
\section{Introduction}
\label{sec:intro}

LLM applications are moving from workflows whose model and tool calls
are fixed before execution~\cite{parrot,llmcompiler} to agentic ones
that decide each next step as they run, by the model or by the
user~\cite{react,autogen,sweagent}. Measured agent workloads show tens to hundreds
of model and tool calls per task~\cite{tracelab,autellix}, so these
decision windows repeatedly enter the critical path. Downstream work waits on a
choice the model has yet to decode: the handler a request needs, the
sub-task a sub-agent runs, the entity a retrieval fetches. Each is
typically followed by model inference. For example, a ReAct agent answering
multi-hop questions decodes the entity it will search
next~\cite{react}, and coding agents such as
Claude Code and Codex show both: they decode the next tool call, and at
interactive decision points they wait for the user to pick from a short
list of suggested actions~\cite{claudecode,codexcli}.

We call this serialization the \emph{branch-resolution barrier}:
downstream computation that may be predictable, or may have run before,
waits for the model output or user choice that determines the branch.
In measured coding, research, and scientific agents, $45$--$57\%$ of
end-to-end latency is downstream time serialized after model output~\cite{paste}.
This creates two opportunities.
Starting likely downstream computation before resolution can hide part of
its latency on the first occurrence (Figure~\ref{fig:coproduced-windows}). Reusing a validated result can avoid
that computation on later occurrences.

\begin{figure}[t]
\centering
\resizebox{\columnwidth}{!}{
\begin{tikzpicture}[
  font=\normalsize,
  box/.style={draw=black, line width=0.5pt, fill=white, minimum height=14pt, inner xsep=4pt, inner ysep=2pt, align=center},
  win/.style={box, fill=fmwin},
  stage/.style={box},
  keptbr/.style={box, densely dashed},
  winbr/.style={box, line width=1.1pt, fill=fmpale},
  losebr/.style={box, densely dotted, draw=black!60, text=black!55},
  fl/.style={-{Latex[length=4pt, width=3pt]}, line width=0.5pt, shorten >=1pt, shorten <=1pt},
  specfl/.style={fl, densely dashed},
  lbl/.style={font=\normalsize\bfseries},
  status/.style={font=\normalsize, text=black!55, anchor=west},
]
\node[lbl, anchor=west] at (-0.2, 1.85) {(a) human-wait window --- indep.\ top-$K$};
\node[stage] (offer) at (0.7, 0.3) {offer\\$A/B/C$};
\node[win, minimum width=66pt] (wait) at (3.0, 0.3) {user deliberates\\{\normalsize ($\approx$2.7\,s, idle)}};
\node[winbr] (ba) at (5.25, 0.98) {$B_A$};
\node[keptbr] (bb) at (5.25, 0.3) {$B_B$};
\node[losebr] (bc) at (5.25, -0.38) {$B_C$};
\node[status, anchor=south] at ([yshift=1pt]ba.north) {used now};
\node[status, align=left] at ([xshift=1pt]bb.east) {reusable};
\node[status] at ([xshift=1pt]bc.east) {discarded};
\node[stage] (click) at (7.78, 0.3) {click\\$=A$};
\draw[fl] (offer) -- (wait);
\draw[specfl] (wait.east) -- (ba.west);
\draw[specfl] (wait.east) -- (bb.west);
\draw[specfl] (wait.east) -- (bc.west);
\draw[fl, line width=1.1pt] (click.north) |- ([yshift=1pt]ba.east);
\begin{scope}[yshift=-2.88cm]
\node[lbl, anchor=west] at (-0.2, 1.75) {(b) model-decode window --- exclusive};
\node[stage] (req) at (0.7, 0.425) {request};
\node[win, minimum width=66pt] (dec) at (3.0, 0.425) {LLM decodes\\{\normalsize ($\approx$4.6\,s)}};
\node[winbr] (h1) at (5.25, 0.88) {$B_{h_1}$};
\node[losebr] (h2) at (5.25, -0.03) {$B_{h_2}$};
\node[status, anchor=south] at ([yshift=1pt]h1.north) {commit};
\node[status] at ([xshift=1pt]h2.east) {cancel};
\node[stage] (emit) at (7.78, 0.3) {emit\\$h_1$};
\draw[fl] (req) -- (dec);
\draw[specfl] (dec.east) -- (h1.west);
\draw[specfl] (dec.east) -- (h2.west);
\draw[fl, line width=1.1pt] (emit.north) |- ([yshift=1pt]h1.east);
\end{scope}
\end{tikzpicture}}
\caption{Predicted subgraphs fill during the shaded resolution window;
each candidate's status is printed beside it. (a)~independent top-$K$.
(b)~exclusive route.}
\Description{Two timelines compare branch promotion policies.
In the human-wait case, the offered candidates A, B, and C fill
independently: A is used after the click, completed B is kept as a reusable shadow,
and unfinished C is discarded. In the LLM-decode case, emission
of h1 commits its branch and cancels h2.}
\label{fig:coproduced-windows}
\end{figure}

Existing systems provide result caching~\cite{gptcache,helium} and
agent-level speculation~\cite{specactions,dsap,spagent} as separate
mechanisms. Caches retain completed responses and operator results, while
speculative agents consume or discard predicted work within the current
request. Prefix caches reuse KV state~\cite{sglang}, and dataflow schedulers
overlap work once its inputs are known~\cite{parrot,teola,llmcompiler,conveyor}.
These systems do not jointly support reusing a subgraph result across
later requests, starting a branch before it resolves, and pricing that
work against serving load. Each is hard in a different way.
\emph{C1, shared-subgraph reuse}: the unit that recurs is a sub-DAG
rather than a whole request, and moving the boundary there must not
weaken exact matching or freshness. \emph{C2, early fill during
resolution}: the name that authorizes reuse does not exist until the branch
resolves, yet the work must start before it does. \emph{C3, load-priced
admission}: speculative work draws on the same slots and KV as the
request it accelerates, so what helps at low load hurts at high load.

\looseness=-1
In this paper, we propose \sys{}, which makes an unresolved branch
addressable before it resolves (\S\ref{sec:design}). For C1, it
records a completed branch subgraph together with the input that
produced it and the external reads it made. A later request consumes
the entry only when
its input matches exactly and those reads are still fresh; a write to
any of them invalidates it. For C2, the workflow template provides a
\emph{coordinate}: the identity of a decision point in the template,
such as the router stage of a routing workflow, known before the branch
resolves in our target workloads (\S\ref{sec:discussion}). Predicted candidate subgraphs run as
\emph{fills} during the resolution window; completed results can also
serve later requests. Fills remain invisible until branch selection
and an exact, fresh \emph{canonical} (non-speculative) demand authorize
their use, so wrong
predictions cannot change workflow results. For
C3, a two-level controller decides whether to generate candidates
and whether to execute each fill, admitting either when expected
benefit exceeds the load price over the fill's execution horizon. That
price reads the engine's current decode and KV occupancy, so it rises as
speculative and canonical work accumulate.

We implement \sys{} at the model-API boundary between agent harnesses
and unmodified SGLang or vLLM engines~\cite{sglang,vllm}, an interface
that covers both framework-driven
workflows~\cite{autogen,langgraph,agentscope} and text-based coding
agents~\cite{dsh,claudecode,codexcli}. 
With Qwen3-32B on 4$\times$ H200 GPUs, \sys{} reduces mean latency by
up to $32\%$ over each workload's strongest prior system on the same
GPUs and by
$46$--$66\%$ against a no-reuse floor, while preserving workflow
results. With reuse disabled, early fill alone reduces mean latency by
$8.5$--$34.4\%$ on model-driven paths and user-turn latency by
$46.6$--$47.7\%$ on \wl{HCI} (\S\ref{sec:eval:c1}). The benefit persists across backbone families and on a commodity
Qwen3-8B/RTX 4090 deployment.

We make three contributions:

\begin{itemize}
  \item We identify the branch-resolution barrier and quantify its cost
    on production traces and four agentic workloads (\S\ref{sec:bg}).
  \item We propose a branch-state abstraction connecting speculative
    execution with cross-request subgraph reuse, promoted on exact
    match and invalidated by writes
    (\S\ref{sec:design:smg}--\S\ref{sec:design:lifecycle}).
  \item We design a two-level online controller that prices candidate
    generation and fill execution against benefit and load, learning
    from delayed feedback
    (\S\ref{sec:design:admission},
    \S\ref{sec:eval:c1}--\S\ref{sec:eval:c1c2-ablation}).
\end{itemize}

%% file: sections/background.tex
\section{Background and Motivation}
\label{sec:bg}
\label{sec:bg:challenges}

An agentic workflow interleaves model calls, tool calls, and user turns,
and at many steps the next one depends on the current step's output: the
model names the tool or sub-task to run, or the user picks among offered
actions. We call each model or tool call, executed on
its rendered input (the prompt or tool arguments after upstream outputs
are substituted), a \emph{stage}, and the stages downstream of a
decision point its \emph{subgraph}. Figure~\ref{fig:funnel} narrows from
all agent work to the part serialized behind such decisions, then to the
part that can start before a decision resolves (C2) and the part a later
request can reuse (C1), all drawn from capacity that canonical
requests also need (C3).

\begin{figure}[t]
\centering
\resizebox{\columnwidth}{!}{
\begin{tikzpicture}[
  x=0.9cm, y=1.15cm,
  font=\fontsize{12}{14}\selectfont,
  stg/.style={anchor=west, font=\fontsize{11.7}{13.7}\selectfont\bfseries},
  chip/.style={draw=black, line width=0.5pt, inner sep=1.4pt,
               font=\fontsize{11.7}{13.7}\selectfont\bfseries, fill=white, anchor=west},
  wild/.style={anchor=west, font=\fontsize{12}{14}\selectfont},
  ours/.style={anchor=west, font=\fontsize{12}{14}\selectfont\itshape, text=black!55},
  down/.style={-{Latex[length=4pt, width=3pt]}, black, line width=0.5pt},
]
\foreach \xl/\xr/\yb/\xll/\xrr/\yt in {%
    0.000/4.600/-0.32/0.325/4.275/-0.63,
    0.325/4.275/-1.27/0.650/3.950/-1.58,
    0.650/3.950/-2.22/0.950/3.650/-2.53} {
  \path[fill=black!4] (\xl,\yb) -- (\xr,\yb) -- (\xrr,\yt) -- (\xll,\yt) -- cycle;
  \draw[black!55, line width=0.5pt] (\xl,\yb) -- (\xll,\yt);
  \draw[black!55, line width=0.5pt] (\xr,\yb) -- (\xrr,\yt);
}
\filldraw[fill=fmpale, draw=black!70, line width=0.5pt]
                                        (0.000,0.32) rectangle (4.600,-0.32);
\filldraw[fill=fmwin, draw=orange!85!black, line width=0.5pt]
                                        (0.325,-0.63) rectangle (4.275,-1.27);
\filldraw[fill=orange!83, draw=orange!85!black, line width=0.5pt]
                                        (0.650,-1.58) rectangle (3.950,-2.22);
\filldraw[fill=green!45!black!75!white, draw=green!30!black, line width=0.5pt]
                                        (0.950,-2.53) rectangle (3.650,-3.17);
\node[stg, anchor=center] at (2.30,0) {Agent work};
\node[chip] at (0.959,-0.95) {F1--F3};
\node[stg] at (2.35,-0.95) {Serial};
\node[chip] at (1.397,-1.90) {C2};
\node[stg] at (2.157,-1.90) {Fillable};
\node[chip] at (1.282,-2.85) {C1};
\node[stg] at (2.042,-2.85) {Reuse};
\node[wild] at (4.80,0.19) {$82$--$318$ steps/task~\cite{tracelab,osworld2}};
\node[ours] at (4.80,-0.19) {$50$ paths / $52$ requests};
\node[wild] at (4.80,-0.76) {$45$--$57\%$ serial time~\cite{paste}};
\node[ours] at (4.80,-1.14) {$26$--$43\%$ overlap};
\node[wild] at (4.80,-1.71) {$59$--$94\%$ coverage~\cite{patil2024gorilla}};
\node[ours] at (4.80,-2.09) {top-$1$ prior $29/30$ turns};
\node[wild] at (4.80,-2.66) {$54$--$62\%$ prefix hits~\cite{kvcachewild}};
\node[ours] at (4.80,-3.04) {$26\%$ sub-tasks recur};
\draw[fill=black!4, draw=black!55, line width=0.4pt]
                                        (0.00,-3.62) rectangle (10.25,-4.46);
\node[chip] at (0.18,-4.04) {C3};
\node[font=\fontsize{12}{14}\selectfont, anchor=west] at (1.05,-3.85) {Spare capacity,};
\node[font=\fontsize{12}{14}\selectfont, anchor=west] at (1.05,-4.23) {Priced by load};
\node[wild] at (4.80,-3.85) {$\approx$$81\%$ SM idle~\cite{rose}};
\node[wild] at (4.80,-4.23) {$>$$50\%$ unused compute~\cite{memgap}};
\end{tikzpicture}}
\caption{Opportunities for early execution (C2) and cross-request reuse
(C1), subject to available serving capacity (C3). Upright figures are
from prior work; italic figures are our own measurements with
speculation disabled.}
\Description{A funnel narrowing downward through four stages --- agent
work, the serialized barrier, fillable, reusable --- each annotated with
a prior-work measurement and the corresponding measurement on our own
workloads, above a bar marking the spare serving capacity that
admission prices.}
\label{fig:funnel}
\end{figure}

\begin{figure*}[!t]
\centering
\resizebox{\textwidth}{!}{
\begin{tikzpicture}[
  font=\normalsize,
  box/.style={draw=black, line width=0.5pt, fill=white, minimum height=14pt, inner xsep=4pt, inner ysep=2pt, align=center},
  stg/.style={box},
  win/.style={box, fill=fmwin},
  spec/.style={box, densely dashed},
  reuse/.style={box, fill=fmreuse, draw=green!30!black, line width=0.9pt},
  ax/.style={font=\normalsize, text=black!55, align=center},
  ttl/.style={font=\normalsize\bfseries, align=center},
  fl/.style={-{Latex[length=4pt, width=3pt]}, line width=0.5pt, shorten >=1pt, shorten <=1pt},
  specfl/.style={fl, densely dashed},
]
\begin{scope}
\node[ttl] at (1.886,2.65) {\wl{Routing}};
\node[win] (r1) at (0.648,1.5) {router\\decodes};
\node[spec] (h1) at (2.142,2.1) {$h_1$};
\node[spec] (h2) at (2.142,1.5) {$h_2$};
\node[spec] (h3) at (2.142,0.9) {$h_3$};
\node[spec] (ex) at (3.380,2.1) {exec};
\draw[specfl] (r1)--(h1);
\draw[specfl] (r1)--(h2);
\draw[specfl] (r1)--(h3);
\draw[specfl] (h1)--(ex);
\node[ax] at (1.886,0.05) {decode $\cdot$ bounded-$K$\\exclusive $\cdot$ chain reuse};
\end{scope}
\begin{scope}[xshift=4.559cm]
\node[ttl] at (2.034,2.65) {\wl{ReAct}};
\node[win] (t1) at (0.441,1.5) {think};
\node[spec] (rv) at (2.104,1.5) {retrieve\\\emph{entity?}};
\node[stg] (rd) at (3.696,1.5) {read};
\draw[specfl] (t1)--(rv);
\draw[fl] (rv)--(rd);
\draw[fl] (rd.north) -- ++(0,0.42) -| (t1.north);
\node[ax] at (2.034,0.05) {decode $\cdot$ \emph{open-vocab}\\exclusive $\cdot$ \emph{per-hop} loop};
\end{scope}
\begin{scope}[xshift=9.414cm]
\node[ttl] at (1.956,2.65) {\wl{Sub-Agent}};
\node[win] (o0) at (0.366,1.5) {plan};
\node[spec] (s1) at (1.568,2.1) {$S_1$};
\node[spec] (s2) at (1.568,0.9) {$S_2$};
\node[reuse] (rr) at (3.157,2.1) {ret$\to$read};
\node[stg] (sy) at (3.157,0.9) {synth};
\draw[specfl] (o0)--(s1);
\draw[specfl] (o0)--(s2);
\draw[specfl] (s1)--(rr);
\draw[fl] (rr)--(sy);
\node[ax] at (1.956,0.05) {decode $\cdot$ bounded-$K$\\exclusive $\cdot$ \emph{cross-request}};
\end{scope}
\begin{scope}[xshift=14.113cm]
\node[ttl] at (1.746,2.65) {\wl{HCI}};
\node[stg] (m0) at (0.435,1.5) {offer};
\node[win] (m1) at (1.914,1.5) {user\\waits};
\node[spec] (ga) at (3.226,2.1) {$B_A$};
\node[spec] (gb) at (3.226,1.5) {$B_B$};
\node[spec] (gc) at (3.226,0.9) {$B_C$};
\draw[fl] (m0)--(m1);
\draw[specfl] (m1)--(ga);
\draw[specfl] (m1)--(gb);
\draw[specfl] (m1)--(gc);
\node[ax] at (1.746,0.05) {\emph{wait} $\cdot$ bounded-$K$\\\emph{indep.\ top-$K$} $\cdot$ \emph{cross-turn}};
\end{scope}
\end{tikzpicture}}
\caption{Four representative workloads. Shaded boxes are resolution
windows; dashed boxes are candidate work; the heavy-outlined box is a
subgraph that recurs across requests. Italic labels differ from
\wl{Routing}.}
\Description{Four workload panels. Routing resolves a bounded handler
choice during router decode and then runs the selected executor; ReAct
predicts an open-vocabulary retrieval entity inside a per-hop loop;
Sub-Agent plans which sub-agent to run, and the chosen one's
retrieve-to-read subtask can recur across requests before the synthesis
step; HCI fills independently offered continuations during user wait.}
\label{fig:workloads}
\end{figure*}

\vspace{0.05in}\noindent\textbf{C1: shared-subgraph reuse.} GPTCache and agentic plan
caching match whole responses or
plans~\cite{gptcache,gptsemcache,plancache}, while SGLang, Pythia, PBKV,
and KVFlow retain prefix-KV state but still execute the
decode~\cite{sglang,pythia,pbkv,kvflow}. Both anchor at the request
boundary, at the whole request or its first token, so neither can
name a sub-DAG that recurs inside requests differing
elsewhere. Its retrieval and tool results also depend on external reads,
so exact matching is not enough: writes to recorded dependencies must
invalidate them (\S\ref{sec:design:cascade}).

\vspace{0.05in}\noindent\textbf{C2: early fill during resolution.} KVFlow, Pythia, and PBKV
prefetch or retain prefix state for a declared or predicted next
stage~\cite{kvflow,pythia,pbkv}, but the stage itself still runs only
after resolution. Speculative Actions executes a predicted API call
early, \mbox{SPAgent} a search action~\cite{specactions,spagent}, DSP a
chain of planning steps~\cite{dsap}, and Speculative Interaction Agents
dependent tool calls alongside model reasoning~\cite{spectool}. PASTE,
the closest, isolates a predicted tool result until a matching
authoritative invocation commits it~\cite{paste}. All five, however,
can use a speculative result only in the request that produced it: the
result is checked against that request's next step and used or
discarded there, so even a correct result is lost when the request
ends. Early fill therefore needs results that stay
isolated until a demand selects them, yet outlive the request that
produced them, which requires naming them before the branch resolves
(\S\ref{sec:design:lifecycle}).

\vspace{0.05in}\noindent\textbf{C3: load-priced admission.} Fills run on
the same GPUs as the requests they accelerate, so each unused fill delays
canonical traffic (Figure~\ref{fig:cobatch}b), more so at high
load~\cite{specqueue}. PASTE, SPAgent, TurboSpec,
and Nightjar limit speculative work under
load~\cite{paste,spagent,turbospec,nightjar}, but each scores a unit once,
before it runs. A branch incurs two costs, running the predictor and then
executing the admitted candidates, and whether either paid off is known
only when the branch resolves, often seconds later. Admission therefore
needs a cost that tracks current load and a benefit estimate corrected by
these delayed outcomes (\S\ref{sec:design:admission}).

\vspace{0.05in}\noindent\textbf{What a workload must supply.}
\label{sec:bg:wall-clock}
First, the work needs an identity independent of runtime content, since
conventional memoization names a result only once its inputs are
known~\cite{gptcache,plancache}; a workflow template supplies one by
naming each decision point, though the model still produces the selection or tool arguments
(\S\ref{sec:discussion}). Second, the stage must be safe to run twice:
no external side effects, same result for the same input and dependency
versions. Greedy inference with batch-invariant
kernels~\cite{he2025nondeterminism} over an immutable model and read-only
retrieval over a versioned snapshot qualify; stages that sample or mutate
external state run only after resolution (\S\ref{sec:design:lifecycle}).
\S\ref{sec:bg:trace-shape} describes four workloads that meet both, and
\S\ref{sec:bg:barrier} measures what each challenge is worth on them.

\subsection{Dynamic Agentic Workloads}
\label{sec:bg:trace-shape}

Agentic workflows determine downstream steps at
runtime~\cite{react,autogen,claudecode}, and the decisions are frequent
and diverse~\cite{tracelab,codeagentbehav}: agent sessions and tasks run
$82$--$318$ model steps~\cite{tracelab,osworld2}, and we measure $50$
distinct paths in $52$ requests, so every run opens tens to hundreds of resolution
windows in which to speculate.

\vspace{0.05in}\noindent\textbf{A workload taxonomy.} Dynamic branches vary along four
dimensions that set the benefit of early execution and the life of its
results: the \emph{resolution window} (model decode or human wait), the
\emph{candidate space} (bounded or open vocabulary), the
\emph{promotion semantics} (one selected winner or multiple valid
results), and the \emph{reuse horizon} (within a chain, across turns,
or across requests). Figure~\ref{fig:workloads} places four workloads on
them. \wl{Routing} is a tool-use
workflow over BFCL function calls~\cite{patil2025bfcl}, \wl{ReAct}~\cite{react} a multi-hop
question-answering agent~\cite{hotpotqa,musique},
\wl{Sub-Agent}~\cite{autogen} an orchestrator that dispatches tasks to
sub-agents, and \wl{HCI} an assistant that offers actions and waits for
the user to pick one~\cite{doherty,klm}.

\subsection{The Branch-Resolution Barrier}
\label{sec:bg:barrier}

\noindent\textbf{Most task latency is serialized.} In coding, research, and
scientific agents, downstream execution takes $45$--$57\%$ of task
latency on the serial model--tool path~\cite{paste}. At branch $k$, work
of duration $T_k^{\mathrm{exec}}$ waits for a resolution window
$T_k^{\mathrm{gen}}$, which for \wl{HCI} is the user's sampled think
time~\cite{doherty,klm}. Early execution can overlap at
most $\min(T_k^{\mathrm{gen}},T_k^{\mathrm{exec}})$; summed along the
critical path and normalized by wall-clock, this gives the workload's
\emph{overlap share}, the most C2 can hide.

Figure~\ref{fig:barrier} shows three observations.
\emph{F1, substantial latency can be overlapped}: the decode-window
workloads can overlap $26$--$43\%$ of wall-clock, and \wl{HCI}'s post-click
execution fits entirely within the modeled think window.
\emph{F2, the window cannot easily be shortened}: on \wl{Routing}, the
resolution decode and the downstream prefill carry similar token volumes
($270$ vs.\ $263$ at p50), yet only the former is emitted
autoregressively~\cite{vllm,sarathi}, and bounding that decode can change
the answer~\cite{s1scaling,cotfaithful}.
\emph{F3, window duration limits the overlap}: at p50, outside \wl{HCI},
the resolution window is $0.36$--$0.77\times$ as long as the downstream
execution.

\begin{figure}[t]
\centering
\begin{tikzpicture}
\begin{axis}[
  width=0.93\columnwidth, height=0.52\columnwidth,
  xbar, xmin=0, xmax=21,
  symbolic y coords={HCI,Sub-Agent,ReAct,Routing},
  ytick=data, y dir=normal,
  xlabel={seconds (p50, whiskers p25--p75)},
  bar width=5.5pt, enlarge y limits=0.18, area legend,
  xmajorgrids, grid style={gray!20},
  tick label style={font=\normalsize}, yticklabel style={font=\normalsize\itshape}, label style={font=\normalsize},
  legend style={at={(0.5,-0.40)}, anchor=north, legend columns=2,
    font=\normalsize, draw=none, fill=none,
    /tikz/every even column/.append style={column sep=8pt}},
  clip=false,
]
\addplot[xbar, fill=fmwin, draw=orange!85!black,
  error bars/.cd, x dir=both, x explicit,
  error bar style={orange!70!black, line width=0.4pt},
  error mark options={orange!70!black, mark size=1.5pt, line width=0.4pt}]
  coordinates {
    (4.6,Routing)   -= (0.8,0) += (0.7,0)
    (3.2,Sub-Agent) -= (0.7,0) += (1.0,0)
    (6.4,ReAct)     -= (1.3,0) += (1.0,0)
    (2.65,HCI)      -= (0.76,0) += (1.06,0)
  };
\addlegendentry{$T^{\mathrm{gen}}$ (window)}
\addplot[xbar, fill=fmexec, draw=black!80,
  error bars/.cd, x dir=both, x explicit,
  error bar style={black!85, line width=0.4pt},
  error mark options={black!85, mark size=1.5pt, line width=0.4pt}]
  coordinates {
    (12.9,Routing)   -= (2.4,0) += (1.9,0)
    (4.8,Sub-Agent)  -= (1.1,0) += (1.2,0)
    (8.3,ReAct)      -= (1.9,0) += (1.6,0)
    (0.87,HCI)       -= (0.0,0) += (0.0,0)
  };
\addlegendentry{$T^{\mathrm{exec}}$ (downstream)}
\node[font=\normalsize, anchor=east, yshift=-5.5pt] at (axis cs:20.9,Routing) {overlap $0.26$};
\node[font=\normalsize, anchor=east, yshift=-5.5pt] at (axis cs:20.9,Sub-Agent) {overlap $0.43$};
\node[font=\normalsize, anchor=east, yshift=-5.5pt] at (axis cs:20.9,HCI) {overlap $\approx$$1.00^{\dagger}$};
\node[font=\normalsize, anchor=east, yshift=-5.5pt] at (axis cs:20.9,ReAct) {overlap $0.32$};
\end{axis}
\end{tikzpicture}
\caption{Resolution windows and downstream execution times; bars show
p50, whiskers p25--p75. $^{\dagger}$Normalized by post-click execution,
not wall-clock.}
\Description{Horizontal paired bars compare the median branch-resolution
window and downstream execution time for Routing, ReAct, Sub-Agent, and
HCI, with interquartile whiskers and each workload's overlap share.}
\label{fig:barrier}
\end{figure}

\vspace{0.05in}\noindent\textbf{Predictability determines coverage cost.} Early execution
saves time only if the admitted set contains the branch eventually
selected, and predictability differs by class. A bounded space is
enumerable from the template, and a skewed mix lets a few candidates
cover most traffic: bounded API selection reaches $91$--$94\%$
with an oracle retriever and $59$--$84\%$ with a deployed
one~\cite{patil2024gorilla}; Figure~\ref{fig:funnel} spans both. A user's
next action follows a frequency prior over those offered
before~\cite{fitchett12accessrank,davison98commands}; in \wl{HCI} the top
candidate matches the selection in $29$ of $30$ turns. An open vocabulary
has neither~\cite{toolret}, so candidates come from context already read:
our \wl{ReAct} predictor ranks entities named in the previous hop's
passage, placing the next in its top three $61\%$ of the time. More
candidates improve coverage~\cite{dpr} at proportionate cost.

\vspace{0.05in}\noindent\textbf{Recurrence concentrates in content sub-tasks.} What a fill
cannot hide, reuse removes outright when the same work is asked for
twice. With speculation disabled, $12\%$ of \wl{ReAct} retrievals in a run and
$26\%$ of \wl{Sub-Agent} sub-tasks across runs have byte-identical
inputs, and recurrence increases as more requests arrive
(Figure~\ref{fig:recurrence}a): repeated \textsc{MuSiQue}~\cite{musique}
sub-topic mentions rise from $19\%$ at $100$ questions to $61\%$ at
$5{,}000$. It appears outside our workloads too: our analysis of released
SWE-agent and SWE-rebench trajectories~\cite{sweagent,swerebench} finds
$42$--$45\%$ of commands recurring verbatim within a repository.
A production prefix-cache study reports $54$--$62\%$ ideal
hits~\cite{kvcachewild}; that unit requires a shared prefix, while the retrievals and sub-tasks above recur
in requests that differ elsewhere: the unit C1 must name.

\vspace{0.05in}\noindent\textbf{Speculation's cost depends on load.} A fill's engine
time pays off only if its result is used. Serving leaves capacity for
it~\cite{mlaas,splitwise}: over a day, GPUs leave
$\approx$$81\%$ of streaming-multiprocessor (SM) time idle~\cite{rose},
decode leaves $>$$50\%$ of compute unused~\cite{memgap}, and human
thinking is $64\%$ of coding-session wall-clock with idle gaps capped at an
hour~\cite{tracelab}. Fills still compete with canonical requests
for decode slots and memory bandwidth; Figure~\ref{fig:cobatch}b
measures p50
latency for a fixed-shape 32B request as fills grow in number and
context. At a fixed context
length, $1$--$3$ co-resident fills raise it by $4$--$6\%$ and $48$ by
$53\%$; holding the count at $24$ and growing each context from none to
$4{,}500$ words raises it from $11\%$ to $145\%$, as the fills grow from
$3\%$ to $24\%$ of the KV pool. Resident KV rises with both, which is why C3's
admission reads it as the measure of speculation in flight
(\S\ref{sec:design:admission}).

\begin{figure}[t]
\centering
\begin{tikzpicture}
\begin{axis}[at={(0.125\columnwidth,0)}, anchor=south west,
  scale only axis, width=0.340\columnwidth, height=0.30\columnwidth,
  title={(a) recurrence},
  xmode=log, log ticks with fixed point,
  xtick={100,1000,5000},
  xticklabels={100,1k,5k},
  xlabel={stream length},
  ylabel={recurrence fraction},
  ymin=0, ymax=1.05,
  ymajorgrids, grid style={gray!20},
  tick label style={font=\normalsize}, label style={font=\normalsize},
  title style={font=\normalsize, yshift=9pt},
  legend to name=fig5recurrencelegend,
  legend style={legend columns=2, inner ysep=0pt,
    /tikz/nodes={inner ysep=0pt},
    font=\normalsize, draw=none, fill=none, /tikz/every even column/.append style={column sep=5pt}},
  clip=false,
]
\addplot[color=blue!60!black, mark=*, mark size=1.5pt, thick] coordinates
  {(100,0.23)(500,0.63)(1000,0.75)(2000,0.82)(5000,0.89)};
\addlegendentry{questions with a repeat}
\addplot[color=green!55!black, mark=square*, mark size=1.4pt, thick] coordinates
  {(100,0.19)(500,0.42)(1000,0.51)(2000,0.56)(5000,0.61)};
\addlegendentry{repeated mentions}
\end{axis}
\begin{axis}[at={(0.500\columnwidth,0)}, anchor=south west,
  scale only axis, width=0.340\columnwidth, height=0.30\columnwidth,
  xmode=log, log basis x=2,
  xtick={1,3,12,24,48}, xticklabels={1,3,12,24,48},
  xlabel={co-resident fills},
  ylabel={canonical stretch (\%)},
  yticklabel pos=right, ylabel near ticks,
  title={(b) interference},
  legend to name=fig5interferencelegend,
  legend style={legend columns=2, font=\normalsize, draw=none, fill=none,
    inner ysep=0pt, /tikz/nodes={inner ysep=0pt},
    /tikz/every even column/.append style={column sep=5pt}},
  ymode=log, log ticks with fixed point,
  ymin=2, ymax=200, ytick={3,10,30,100},
  ymajorgrids, grid style={gray!20},
  tick label style={font=\normalsize}, label style={font=\normalsize},
  title style={font=\normalsize, yshift=9pt},
  clip=false,
]
\addplot[color=red!70!black, mark=*, mark size=1.5pt, thick] coordinates
  {(1,4.3)(2,4.8)(3,5.9)(6,11.7)(9,14.8)(12,15.4)(24,30.1)(48,53.4)};
\addlegendentry{slots}
\addlegendimage{color=violet!65!black, mark=square*, mark size=1.4pt, thick, dashed}
\addlegendentry{context (words)}
\end{axis}
\begin{axis}[at={(0.500\columnwidth,0)}, anchor=south west,
  scale only axis, width=0.340\columnwidth, height=0.30\columnwidth,
  axis x line*=top,
  xmin=-350, xmax=4900,
  xtick={0,1500,4500}, xticklabels={0,1.5k,4.5k},
  xtick align=outside, xminorticks=false,
  axis line style={violet!65!black}, tick style={violet!65!black},
  ymode=log, ymin=2, ymax=200,
  ytick=\empty, axis y line=none,
  x label style={font=\normalsize},
  tick label style={font=\normalsize, text=violet!65!black},
  clip=false,
]
\addplot[color=violet!65!black, mark=square*, mark size=1.4pt, thick, dashed]
  coordinates {(0,10.8)(1500,50.5)(4500,144.5)};
\end{axis}
\coordinate (figfivelegendanchor) at (current bounding box.south);
\node[anchor=north, inner sep=0pt, outer sep=0pt] (figfivelegenda)
  at ([yshift=-3pt]figfivelegendanchor) {\pgfplotslegendfromname{fig5recurrencelegend}};
\node[anchor=north, inner sep=0pt, outer sep=0pt]
  at (figfivelegenda.south) {\pgfplotslegendfromname{fig5interferencelegend}};
\end{tikzpicture}
\caption{(a)~Share of \textsc{MuSiQue} questions that repeat an earlier sub-topic, and of
sub-topic mentions that are repeats, as the stream grows. (b)~Slowdown
of a canonical request as fills grow in number and context.}
\Description{Panel a plots, as a MuSiQue question stream grows, the share
of questions that repeat an earlier sub-topic and the share of sub-topic
mentions that are repeats. Panel b plots canonical decode slowdown as either the
number of co-resident fills or the context each one holds
increases.}
\label{fig:recurrence}
\label{fig:cobatch}
\end{figure}

%% file: sections/design.tex
\section{\sys{} Design}
\label{sec:overview}
\label{sec:design:opt}
\label{sec:design}

\subsection{System Overview}

\sys{} is a serving layer between agentic harnesses and model-serving
engines, built from one interface and three parts. The \emph{candidate
interface} takes a branch coordinate and returns a ranked candidate
list; which predictor sits behind it is a deployment choice
(\S\ref{sec:design:lifecycle}). A two-level \emph{gate}
prices whether to open a branch, and which of its candidates to fill,
against current load. The \emph{Completed-Subgraph Library} (CSL)
holds finished subgraph results for later reuse, and a reverse index
from the external records they read invalidates them. A per-branch record
follows each open branch until it resolves and returns what it cost and
what it saved to the gate.

\begin{figure*}[t]
\centering
\resizebox{\textwidth}{!}{%
\begin{tikzpicture}[
  font=\normalsize,
  lane/.style={black!28, line width=0.4pt},
  lnm/.style={font=\normalsize\bfseries, anchor=east, inner sep=0pt},
  ttl/.style={font=\normalsize\bfseries, anchor=west, inner sep=0pt},
  ev/.style={draw=black, line width=0.5pt, fill=white, inner xsep=3pt, inner ysep=1.8pt,
             minimum height=11pt, anchor=west, align=center},
  hostev/.style={ev, fill=fmpale},
  reuse/.style={ev, draw=green!30!black, line width=0.9pt, fill=fmreuse},
  wrt/.style={ev, draw=red!70!black, text=red!70!black},
  cmsg/.style={-{Latex[length=4pt, width=3pt]}, black, line width=0.5pt},
  smsg/.style={-{Latex[length=4pt, width=3pt]}, orange!85!black, line width=0.5pt,
               densely dashed},
  gmsg/.style={-{Latex[length=4pt, width=3pt]}, green!30!black, line width=1.0pt},
  wmsg/.style={-{Latex[length=4pt, width=3pt]}, red!70!black, line width=0.5pt},
  cased/.style={preaction={draw=white, line width=2.2pt}},
  tag/.style={font=\normalsize, text=black!55, inner sep=1.5pt},
]
\def\yA{0}     \def\yH{-0.74} \def\yEc{-1.44} \def\yE{-1.87} \def\yF{-2.28}
\def\yS{-2.96} \def\bh{0.18}
\def\xao{1.90} \def\xai{12.25} \def\xbo{12.75} \def\xbi{17.60}
\foreach \y in {\yA,\yH,\yS} {
  \draw[lane] (\xao,\y) -- (\xai,\y);
  \draw[lane] (\xbo,\y) -- (\xbi,\y);
}
\fill[black!4] (\xao,-2.50) rectangle (\xai,-1.24);
\fill[black!4] (\xbo,-2.50) rectangle (\xbi,-1.24);
\node[lnm] at (1.72,\yA) {Agent};
\node[lnm] at (1.72,\yH) {\sys{}};
\node[lnm] at (1.72,\yE) {Engines};
\node[lnm] at (1.72,\yS) {Stores};
\draw[black!45, line width=0.4pt, densely dashed]
  (12.50,0.42) -- (12.50,-3.28);
\node[ttl] at (\xao,0.74) {(a)~one request: fill early, consume at the demand};
\node[ttl] at (\xbo,0.74) {(b)~later demands and writes};
\node[ev]     (A1) at (\xao,\yA) {request};
\node[hostev] (H1) at (2.40,\yH) {propose~\cnum{1}, price~\cnum{2}};
\filldraw[fill=fmpale, draw=black, line width=0.5pt]
  (2.95,\yEc-\bh) rectangle (7.75,\yEc+\bh);
\node at (5.95,\yEc) {canonical decode ($T^{\mathrm{gen}}$)};
\draw[cmsg] (2.60,0 |- A1.south) -- (2.60,0 |- H1.north);
\draw[cmsg] (3.05,0 |- H1.south) -- (3.05,\yEc+\bh);
\filldraw[fill=fmwin, draw=orange!85!black, line width=0.5pt]
  (3.55,\yE-\bh) rectangle (7.95,\yE+\bh);
\node at (5.75,\yE) {fill $h_1$~\cnum{3}};
\filldraw[fill=fmwin, draw=orange!85!black, line width=0.5pt]
  (3.55,\yF-\bh) rectangle (5.80,\yF+\bh);
\node at (4.68,\yF) {$h_2$};
\node[text=red!70!black, anchor=west, inner sep=2pt] at (5.83,\yF) {\ding{55}};
\node[tag, anchor=west] at (6.20,\yF) {discarded};
\draw[smsg, -] (3.45,0 |- H1.south) -- (3.45,\yF);
\draw[smsg] (3.45,\yE) -- (3.59,\yE);
\draw[smsg] (3.45,\yF) -- (3.59,\yF);
\node[ev, anchor=center] (S1) at (7.85,\yS) {shadow~\cnum{4}};
\draw[smsg] (7.85,\yE-\bh) -- (7.85,0 |- S1.north);
\node[ev] (A2) at (7.90,\yA) {stage request};
\draw[{Latex[length=4pt, width=3pt]}-{Latex[length=4pt, width=3pt]}, black!45,
      line width=0.4pt] (A1.east) -- (A2.west);
\node[tag, anchor=south, fill=white, inner sep=1.5pt] at (5.50,-0.16) {resolution window};
\draw[cmsg, cased] (7.75,\yEc) -- (8.05,\yEc) -- (8.05,0 |- A2.south);
\fill[black] (8.05,\yH) circle (1.1pt);
\node[hostev] (H2) at (9.20,\yH) {authorize~\cnum{5}};
\draw[cmsg] (9.50,0 |- A2.south) -- (9.50,0 |- H2.north);
\node[reuse] (S2) at (9.55,\yS) {promote $h_1$};
\draw[gmsg] (10.90,0 |- H2.south) -- (10.90,0 |- S2.north);
\node[ev] (A3) at (10.85,\yA) {response};
\draw[gmsg, cased] (11.30,0 |- S2.north) -- (11.30,0 |- A3.south);
\draw[black!45, line width=0.4pt, densely dashed]
  (8.20,-2.12) rectangle (10.75,\yEc+\bh);
\node[tag, align=center] at (9.475,-1.69) {saved\\serialized work};
\node[ev] (A4) at (\xbo+0.10,\yA) {later demand};
\node[reuse] (S3) at (12.85,\yS) {hit~\cnum{6}};
\draw[gmsg, cased] (13.00,0 |- A4.south) -- (13.00,0 |- S3.north);
\draw[gmsg, cased] (13.70,0 |- S3.north) -- (13.70,0 |- A4.south);
\node[tag, anchor=west] at (14.30,\yEc) {no engine work};
\node[wrt, anchor=west] (S4) at (14.45,\yS) {write: invalidate~\cnum{7}};
\node[wrt, anchor=west] (E4) at (15.55,\yF) {evict KV};
\draw[wmsg] (16.05,0 |- S4.north) -- (16.05,0 |- E4.south);
\draw[lane] (\xao,-3.38) -- (\xbi,-3.38);
\node[tag, anchor=west] at (\xao,-3.56)
  {\textbf{C1} \S\ref{sec:design:smg} exact reuse: \cnum{6}~\cnum{7}};
\node[tag, anchor=west] at (7.20,-3.56)
  {\textbf{C2} \S\ref{sec:design:lifecycle} early fill: \cnum{3}~\cnum{4}~\cnum{5}};
\node[tag, anchor=west] at (12.75,-3.56)
  {\textbf{C3} \S\ref{sec:design:admission} admission: \cnum{1}~\cnum{2}~\cnum{3}};
\end{tikzpicture}}
\caption{Lifecycle of a speculative branch result. Lanes are components,
time runs left to right; \ding{55} marks the discarded loser.}
\Description{A four-lane timeline with agent, \sys{}, engines, and stores
lanes and time running left to right. In the first panel a request opens a
branch; \sys{} proposes and prices candidates, whose admitted fills execute
on the engines concurrently with the canonical decode that names the stage,
for the whole window in which the agent is blocked. The completed fill becomes an invisible
shadow in the stores and the losing candidate is reclaimed. When the decode
resolves, the selection returns to the agent, whose stage request authorizes
the matching shadow; it is promoted and consumed to answer without further
engine work, marked by a dashed slot labeled saved serialized work. In the second
panel a later exact demand is served from the store while the engines stay
idle, and an external write invalidates both the stored entry and
engine-resident state.}
\label{fig:overview}
\end{figure*}

\vspace{0.05in}\noindent\textbf{Life of a request.}
Figure~\ref{fig:overview} follows a model-decided branch: its resolution
window lets predicted downstream work run before the next stage is named.
The predictor uses the branch \emph{coordinate}, the stable
identity of that decision in the workflow template, to propose
downstream candidates (\cnum{1}). The gate prices each candidate and admits those whose
expected benefit exceeds its load price (\cnum{2}); the admitted fills then run
on the same engines as canonical traffic (\cnum{3}), concurrently
with the branch-resolution decode.

A completed fill remains an invisible \emph{shadow}, a speculative
result that serving paths cannot read (\cnum{4}). Branch resolution eliminates
losers but does not authorize the survivor: only a subsequent
canonical demand with the same exact reuse name can promote and
consume it (\cnum{5}). This separation takes the serialized work off the
critical path without exposing unchosen work.
The confirmed entry then serves later
exact matches (\cnum{6}) until a dependency
write invalidates it (\cnum{7}).

\sys{} treats reuse, overlap, and execution as separate
decisions.
\emph{C1: what may be reused?} A subgraph, identified by its exact
realized input and fresh dependencies (\S\ref{sec:design:smg}).
\emph{C2: how can work run before resolution?} A stable coordinate starts
fills during the window; a matching canonical demand authorizes use
(\S\ref{sec:design:lifecycle}).
\emph{C3: when should that work run?} When its expected latency
benefit exceeds its load-priced resource cost
(\S\ref{sec:design:admission}).

\subsection{Shared-Subgraph Reuse (C1)}
\label{sec:design:smg}

C1 must capture reuse that exists below the whole-request boundary or
across requests that differ elsewhere, without serving a hit it cannot
replay. Replay safety asks more than a semantic
cache's approximate match~\cite{gptsemcache}: every result-determining
input must match, every external dependency must remain fresh, and a
multi-node result must come from one execution, not fragments of several.

\vspace{0.05in}\noindent\textbf{Subgraph reuse names.}
Exact reuse follows the dependency closure captured by a rendered input,
not a node's graph position. A whole-response key needs the whole workflow
to repeat~\cite{gptcache} and a prefix-region key an unchanged
prefix~\cite{sglang,kvflow}. \sys{} instead makes the unit a
self-contained subgraph, depending only on the upstream values entering
its own rendered inputs, so it stays valid beyond an unrelated change.
It records that subgraph with the realized input, output, external
read-set, and provenance of one execution. Its \emph{reuse name} combines a subgraph fingerprint
with a hash of the realized input, covering the request fields that
determine the output. C2
addresses work before resolution through the branch coordinate, which
identifies an unresolved decision and therefore cannot authorize reuse.

\label{sec:design:cascade}\label{sec:impl:cascade}
Freshness is maintained by a reverse index from external records to
cached entries. A write invalidates direct dependents and any cached
or engine-resident state that loses a valid producer; incomplete
provenance conservatively extends the cascade. Reads synchronize
with invalidation, so concurrent or uncertain cases fall back to a
miss. Writes may reduce reuse but cannot expose stale state; the cascade
follows view-maintenance and lineage
work~\cite{blakeley1986,noria,sac}.
Together, these checks make a hit replay one recorded execution~\cite{rr}.

\vspace{0.05in}\noindent\textbf{Benefit of reuse.}
Let $T_j^{\mathrm{sub}}$ be the
execution the subgraph rooted at decision point $j$ adds beyond its nested
subgraphs, and $P_j^{\mathrm{hit}}$ the probability that a later request is
served it from cache while its recorded reads stay fresh. In expectation,
reuse removes
\begin{equation}
\label{eq:reuse}
  \mathbb{E}[\Delta T_{\mathrm{reuse}}]
  = \sum_j P_j^{\mathrm{hit}}\,T_j^{\mathrm{sub}}
\end{equation}
from that request's critical path. The boundary sets $P^{\mathrm{hit}}$:
under one realized change
(Figure~\ref{fig:ladder}), the whole-response, prefix-region, and
subgraph boundaries serve $0$, $2$, and $4$ of six taken nodes at the same
exactness and freshness. A cache that does not model what the stage
computes cannot match more often than this and stay exact, so promotion
refuses nothing a safe rule could admit
(\hyperlink{supp-c1}{Corollary~A.4}).
C3 values only the execution
Equation~\eqref{eq:reuse} leaves.

\begin{figure}[t]
\centering
\begin{tikzpicture}[
  font=\normalsize,
  nd/.style={draw=black, line width=0.8pt, circle, minimum size=14pt, inner sep=1pt, fill=white},
  hit/.style={nd, fill=fmreuse, draw=green!30!black},
  chg/.style={nd, draw=red!80!black},
  cold/.style={nd},
  unt/.style={nd, densely dotted, draw=black!55, text=black!55},
  xm/.style={font=\normalsize\bfseries, text=red!80!black, inner sep=0pt,
    anchor=south west},
  fl/.style={-{Latex[length=3.4pt, width=2.4pt]}, line width=0.35pt, shorten >=1pt, shorten <=1pt},
  ctl/.style={fl, densely dashed, black!55},
  ttl/.style={font=\normalsize\bfseries, align=center},
  num/.style={font=\normalsize\bfseries, text=black!80, anchor=base},
  tag/.style={font=\normalsize, text=black!55, anchor=base},
]
\begin{scope}
\node[ttl] at (0.62, 4.45) {whole response};
\node[cold] (a1) at (0, 3.9) {$v_1$};
\node[cold] (a2) at (0, 3.2) {$v_2$};
\node[chg]  (a3) at (0, 2.5) {$v_3$};
\node[xm] at ([shift={(1.5pt,1.5pt)}]a3.north east) {\ding{55}};
\node[cold] (a4) at (0, 1.8) {$v_4$};
\node[unt]  (au) at (1.25, 3.2) {$u$};
\node[cold] (as1) at (1.25, 2.05) {$x_1$};
\node[cold] (as2) at (1.25, 1.3) {$x_2$};
\draw[fl] (a1)--(a2); \draw[fl] (a2)--(a3); \draw[fl] (a3)--(a4);
\draw[fl, densely dotted, black!55] (a2)--(au);
\draw[ctl] (a3)--(as1); \draw[ctl] (a4)--(as2);
\node[num] at (0.62, 0.62) {$\mathbf{0/6}$};
\node[tag] at (0.62, 0.15) {served};
\end{scope}
\begin{scope}[xshift=2.7cm]
\node[ttl] at (0.62, 4.45) {prefix region};
\node[hit]  (b1) at (0, 3.9) {$v_1$};
\node[hit]  (b2) at (0, 3.2) {$v_2$};
\node[chg]  (b3) at (0, 2.5) {$v_3$};
\node[xm] at ([shift={(1.5pt,1.5pt)}]b3.north east) {\ding{55}};
\node[cold] (b4) at (0, 1.8) {$v_4$};
\node[unt]  (bu) at (1.25, 3.2) {$u$};
\node[cold] (bs1) at (1.25, 2.05) {$x_1$};
\node[cold] (bs2) at (1.25, 1.3) {$x_2$};
\draw[fl] (b1)--(b2); \draw[fl] (b2)--(b3); \draw[fl] (b3)--(b4);
\draw[fl, densely dotted, black!55] (b2)--(bu);
\draw[ctl] (b3)--(bs1); \draw[ctl] (b4)--(bs2);
\node[num] at (0.62, 0.62) {$\mathbf{2/6}$};
\node[tag] at (0.62, 0.15) {unchanged prefix};
\end{scope}
\begin{scope}[xshift=5.4cm]
\node[ttl] at (0.62, 4.45) {$+$ subgraph splice};
\node[hit]  (c1) at (0, 3.9) {$v_1$};
\node[hit]  (c2) at (0, 3.2) {$v_2$};
\node[chg]  (c3) at (0, 2.5) {$v_3$};
\node[xm] at ([shift={(1.5pt,1.5pt)}]c3.north east) {\ding{55}};
\node[cold] (c4) at (0, 1.8) {$v_4$};
\node[unt]  (cu) at (1.25, 3.2) {$u$};
\node[hit] (cs1) at (1.25, 2.05) {$x_1$};
\node[hit] (cs2) at (1.25, 1.3) {$x_2$};
\draw[fl] (c1)--(c2); \draw[fl] (c2)--(c3); \draw[fl] (c3)--(c4);
\draw[fl, densely dotted, black!55] (c2)--(cu);
\draw[ctl] (c3)--(cs1); \draw[ctl] (c4)--(cs2);
\node[num] at (0.62, 0.62) {$\mathbf{4/6}$};
\node[tag] at (0.62, 0.15) {$+$ self-contained};
\end{scope}
\node[num, text=black!60] at (1.97, 0.62) {$<$};
\node[num, text=black!60] at (4.67, 0.62) {$<$};
\end{tikzpicture}
\caption{One realized change (red \ding{55}) under three exact serving
boundaries. Green: served from cache; dashed arrows reach control-only
descendants; dotted $u$ is an untaken branch. Footers: nodes served of
six taken.}
\Description{Three copies of the same workflow show whole-response,
prefix-region, and subgraph-splice reuse after one node changes.
Whole-response reuse serves nothing, region reuse serves the
unchanged prefix, and subgraph splicing additionally serves two
self-contained downstream tasks whose inputs exclude the changed
value.}
\label{fig:ladder}
\end{figure}

On a canonical demand, C1 derives the reuse name and probes the CSL's
confirmed entries under the freshness guard. A hit returns the recorded
result; a miss executes normally and may refill the CSL, which is bounded
and evicts by recency, frequency, and cost per byte.

\subsection{Early Fill During Resolution (C2)}
\label{sec:design:lifecycle}

C2 must avoid two failures. The first is serial: downstream work that
could have run inside the resolution window waits for it instead, and the
obvious repair (publishing a result as soon as it completes) serves work
no demand selected~\cite{smith1981}. The second is cross-request:
speculative decoding, speculative agents, and speculative tool systems bind
validation and consumption to the request that authorized the
work~\cite{leviathan2023,specactions,dsap,paste,spectool}, so a result that
survives its branch still dies with its request. C2 therefore decouples
computation, selection, and authorization.

\vspace{0.05in}\noindent\textbf{Where candidates come from.}
Any predictor can sit behind the interface: a next-stage model over the
coordinate's history, a retrieval ranker, or a small draft LM sampled $N$
times and ranked by how often it names a candidate
(\S\ref{sec:eval:cost}). Only its ordering is used, since its scores are
uncalibrated; a per-rank consumption rate, learned online, replaces them.

\vspace{0.05in}\noindent\textbf{Speculative results and promotion.}
C2 speculates only on \emph{speculable} stages, those the template declares
safe to run twice (\S\ref{sec:bg:wall-clock}); unannotated, sampling, or
externally mutating stages run only after resolution.
Admission turns a candidate into a \emph{fill}. Once the fill's input
renders, C1 derives the reuse name, and completion records that name, the
result, and its read-set as a persistent shadow,
visible to the gate's per-branch record but unreadable by serving paths.

A canonical demand resolves this lifecycle in two steps. First,
selection settles the branch. For an exclusive branch, the resolved choice
retains the matching candidate and aborts the losers' in-flight fills in
the engine, releasing their slots and KV and discarding their results;
independent offers remain eligible, their shadows
unreadable until an exact demand authorizes them or they expire. An
ambiguous branch or choice promotes nothing. Second, authorization applies
C1's guards: the canonical reuse name must match the shadow's,
and its recorded read-set must remain fresh. Only then is the shadow
atomically promoted and consumed: selection
eliminates work, content and freshness serve it.

If the matching fill is still running, the canonical demand waits a
bounded interval and otherwise executes normally. Multi-stage
fills are promoted stage by stage as canonical demands pass the same
guards.

\vspace{0.05in}\noindent\textbf{Speculative--canonical equivalence.}
Let $F(q,\theta,D)$ be a speculable stage run with rendered request
$q$, model and execution parameters $\theta$, and dependency versions
$D$, and write $s$ and $c$ for its speculative and canonical executions.
Promotion requires a reuse-name match under the freshness guard. The name
covers the stage fingerprint, the complete rendered input, and both
parameter sets, so a match gives $(q_s,\theta_s)=(q_c,\theta_c)$, and the
guard gives $D_s=D_c$. Determinism then yields
\begin{equation*}
  y_s=F(q_s,\theta_s,D_s)=F(q_c,\theta_c,D_c)=y_c ,
\end{equation*}
and a promoted stage $i$ renders to stage $i+1$ the input its canonical
execution would have rendered, re-establishing the hypothesis along a
multi-stage fill. Two properties follow. \emph{Value equivalence}: a
promoted subgraph returns exactly what canonical execution would.
\emph{Non-interference}: unpromoted work is never observed---no external
effect, no readable state---and exclusive losers do not outlive their branch. A failed precondition is a
miss, so speculation changes only latency.

\vspace{0.05in}\noindent\textbf{Benefit of early fill.}
At branch $k$, let $T_k^{\mathrm{gen}}$ be the generation
window that resolves the branch, $T_k^{\mathrm{exec}}$ the selected
downstream work, and $P_k^{\mathrm{cov}}$ the probability that
the admitted set contains the selected candidate. Ignoring contention,
early fill at branch $k$ hides at most
$\min(T_k^{\mathrm{gen}},T_k^{\mathrm{exec}})$ if the admitted set covers
the selection and nothing otherwise (Figure~\ref{fig:barrier}). With the
durations fixed, in expectation over coverage:
\begin{equation}
\label{eq:saving}
  \mathbb{E}[\Delta T_{\mathrm{gross}}]
  \le \sum_k P_k^{\mathrm{cov}}\,
    \min\!\bigl(T_k^{\mathrm{gen}},T_k^{\mathrm{exec}}\bigr).
\end{equation}
Equation~\eqref{eq:saving} is a gross bound
(\hyperlink{supp-overlap}{Lemma~A.6}): C3 prices contention
explicitly~\cite{specqueue,vulimiri,synergy}, and C2 admits fills only while a
positive overlap window remains. The opportunity grows with branch
predictability, downstream cost, and window length.
For a user-wait window, the same expression substitutes think time for
generation time and measures the reduction in post-click
latency~\cite{doherty,brutlag2009}.

\subsection{Load-Priced Admission (C3)}
\label{sec:design:admission}

\sys{} admits speculative work in two steps (Algorithm~\ref{alg:admission}).
For each branch, the \emph{generation gate} (Action~0) decides whether
running the predictor and expanding candidates is worth its fixed cost.
If it opens, the \emph{fill gate} (Action~1) admits a candidate only if
the expected time it saves exceeds the expected delay it imposes on
canonical requests at current load, re-pricing after each admission. The
two are not one decision: generation itself costs time that no fill
threshold can recover, and a cheap generation step can still reveal fills
that are individually too costly.

Predicting more actions raises the chance of covering the realized branch, but
also creates work that may never be
used~\cite{specactions,spagent,specinfer,multicand}. Admission must
therefore protect canonical demand from those fills.
Paying for an alternative before knowing whether it is needed is classical
optimal search~\cite{weitzman1979}; here the alternative is produced, not
fetched, so its price moves with contention.

\vspace{0.05in}\noindent\textbf{Causal load price.}
At time $t$, C3 observes a load view $V_t$ comprising normalized
fill occupancy $o_{\mathrm{fill}}$, reservation-budget occupancy
$o_{\mathrm{KV}}$, and canonical-demand pressure $\rho_t$. Admission is the
only point at which contention can be refused: a running fill holds its
slots until it completes or its branch cancels it, so an instantaneous
idle reading would underprice a multi-second fill. C3 prices against the
causal envelope~\cite{fdp}
\begin{equation}
\label{eq:horizon-pressure}
 \bar\rho_H(t)=\max\!\left\{\rho_t,
   \frac{1}{H}\int_{t-H}^{t}\rho_\tau\,d\tau\right\}.
\end{equation}
Here $H$ matches the window used to estimate speculative co-residency.
Rising
load takes effect immediately; falling load relaxes the price gradually
over a full execution window. A periodic probe updates this history even
while the generation gate is closed.

Candidate $s$ carries footprint
$\hat{\mathbf R}_s=(\hat R_{\mathrm{slot},s},\hat R_{\mathrm{KV},s})$,
covering decode time and KV residency weighted by its budget share. Monotone
price curves
$\boldsymbol\beta(V_H)=(\beta_{\mathrm{slot}},\beta_{\mathrm{KV}})$, in
milliseconds of canonical delay per unit of each resource, convert the
footprint into a delay cost,
$C_s(V_H)=\boldsymbol\beta(V_H)^{\!\top}\hat{\mathbf R}_s$, where $V_H$
is the load view $V_t$ with $\bar\rho_H$ in place of $\rho_t$. The slot
price scales with $\bar\rho_H$ and the KV price with budget occupancy; both
are initialized from Figure~\ref{fig:cobatch}b and steepen near capacity.
Measured prices govern value; fill-count and atomic KV reservations remain
hard feasibility constraints.

\vspace{0.05in}\noindent\textbf{Generation gate (Action 0).}
A branch's payoff is known only after it settles, so C3 learns one
full-path return per branch entry:
\begin{equation}
\label{eq:plane-target}
 \hat y_b=\sum_{s\in\mathrm{consume}(b)}\hat G_s^{\Delta}
     -T_{\mathrm{setup},b}-\hat K_b-\hat W_b .
\end{equation}
Here $\mathrm{consume}(b)$ is the set of fills branch $b$ consumed and
$\hat G_s^{\Delta}$ is the execution a consumed fill saves after C1 lookup,
so the first term credits only the residual C1 left. The remaining terms charge measured prediction and
expansion time $T_{\mathrm{setup},b}$, the cost $\hat K_b$ of the fills this
branch enabled, and $\hat W_b$ for expired or uncovered waits. Using one
aggregate target is essential: summing separate optimism bonuses over
mutually exclusive candidates would overvalue a branch.

A branch class $c$ groups the branches that share a coordinate, so repeat
visits to one decision point pool their statistics. For each class,
context $x$ summarizes horizon pressure, whether a
low reading is transient, recent fill consumption, and queue backlog.
A contextual ridge model maintains
$A_c=I+\sum_{b\in c}x_bx_b^\top$ and
$q_c=w_{0,c}+\sum_{b\in c}x_b\hat y_b$ from a class prior $w_{0,c}$. Following
linear UCB~\cite{li2010linucb}, the gate opens when
\begin{equation}
\label{eq:enabling}
 \hat U_0(c,x)=x^\top A_c^{-1}q_c+
   \hat\sigma_c\sqrt{x^\top A_c^{-1}x}>0 ,
\end{equation}
where $\hat\sigma_c$ scales exploration by the observed target RMS. C3
allows $n_0$ cold trials per class, counted by $n_c$, before applying it,
and registers
each trial in the branch's record before expansion, so feedback arriving
during setup is not lost (\hyperlink{supp-action0}{App.~A, generation gate}).

A predictor too expensive to run inline is time-boxed off the dispatch
path and yields an empty set at its deadline, which bounds
$T_{\mathrm{setup},b}$. The time it spent is charged either way, so a branch class whose
predictor repeatedly pays setup and returns nothing consumable drives
$\hat y_b$ negative and closes its own gate.

\vspace{0.05in}\noindent\textbf{Fill gate (Action 1).}
For candidate $s$, the optimistic probability of useful service factors as
$\tilde p_s=\tilde\pi_s^{\mathrm{sel}}\bar\pi_s^{\mathrm{use}}$: a
rank-conditioned Beta upper quantile~\cite{bayesucb} times the posterior mean
of timely usability conditional on selection. It clears the priced rule
when
\begin{equation}
\label{eq:ev}
 U_s(V_H)=\tilde p_s\hat G_s^{\Delta}
   -(1-\tilde p_s)C_s(V_H)>0 .
\end{equation}
Both terms are measured in milliseconds. If
$\hat G_s^{\Delta}+C_s(V_H)>0$, the rule is equivalent to
\begin{equation}
\label{eq:breakeven}
 \tilde p_s>p^\ast(s,V_H)=
 \frac{C_s(V_H)}{\hat G_s^{\Delta}+C_s(V_H)} .
\end{equation}
The threshold is exact at the current
projected load, so the same candidate can pass under headroom and fail under
contention. On the exploitation path, Action~1 admits the highest-utility
candidate that passes feasibility and atomic reservation, projects its
pressure, and reprices the rest~\cite{tip}. It stops when no remaining
candidate has positive utility (\hyperlink{supp-action1}{Proposition~A.8}). Seeded audits occasionally
admit refused work, so closed gates keep learning~\cite{vernade2020delayed}.

\begin{algorithm}[t]
\caption{Load-priced admission with delayed feedback.}
\label{alg:admission}
\begin{algorithmic}[1]
\Require load history, posteriors, branch-class models
\Statex \textbf{Action 0: generation gate for branch $b$}
\State $c\gets\Call{BranchClass}{b}$;
       $(x,V_H)\gets\Call{HorizonContext}{}$
\State \textbf{if} $n_c<n_0\le n_c+n_c^{\mathrm{pending}}$
       \textbf{then return} \Comment{cold trials in flight}
\State \textbf{if} $n_c\ge n_0\land\hat U_0(c,x)\le0\land\lnot$%
       \Call{Audit}{b,0} \textbf{then return}
\State $\tau\gets\Call{BeginTrial}{b,c,x}$
\State $\mathcal A\gets\Call{EligibleCandidates}{b}$;
       $\mathcal S\gets\emptyset$; $V'\gets V_H$
\Statex \textbf{Action 1: fill gate at projected load}
\While{$\mathcal A\ne\emptyset$}
  \State $s\gets\arg\max_{a\in\mathcal A}U_a(V')$;
         $\mathcal A\gets\mathcal A\setminus\{s\}$;
         $probe\gets0$
  \If{$U_s(V')\le0$}
    \State \textbf{if} $\lnot$\Call{Audit}{b,1} \textbf{then break};
           $probe\gets1$
  \EndIf
  \If{$\Call{HardFeasible}{V',s}\land\Call{Reserve}{b,s}$}
    \State record $(1-\tilde p_s)C_s(V')$ and the producer identity
    \State $\mathcal S\gets\mathcal S\cup\{s\}$;
           $V'\gets\Call{ProjectLoad}{V',s}$
  \EndIf
  \State \textbf{if} $probe=1$ \textbf{then break}
\EndWhile
\State \Call{SubmitFills}{$b,\mathcal S$};
       \Call{ChargeSetup}{$\tau,T_{\mathrm{setup},b}$}
\Statex \textbf{Feedback: event $e$ for trial $\tau$}
\State update selection, usability, gain, cost from $e$
\State \Call{ReviseTrial}{$\tau,\Delta_e\hat y_b$};
       \Call{ReleaseSettledResources}{$e$}
\end{algorithmic}
\end{algorithm}

When the branch resolves, C3 updates selection for every predicted
candidate, admitted or not; it updates usability only for admitted,
selected fills, so one that misses
its deadline for use after resolution lowers usability without becoming a prediction error. Delayed events revise the originating branch's
target rather than adding samples. Credit requires the consumed entry to
name the speculative producer recorded at admission (\S\ref{sec:eval:c1}).
Pricing classes separate single- and multi-stage fills, so a chain's
cost is compared with its full saving.

The three contracts now form one lifecycle: C3 admits fills, C2 runs
them during resolution while hiding their results, and exact, fresh
C1 demands promote and reuse them.
A C1 hit then suppresses the execution it answers and the speculation
below it, unless an external wait follows the stage.

%% file: sections/implementation.tex
\section{Implementation}
\label{sec:impl}

\noindent\textbf{Agentic harness integration.}
Framework-driven harnesses such as AutoGen~\cite{autogen},
LangGraph~\cite{langgraph}, and AgentScope~\cite{agentscope} attach to
\sys{} through an HTTP interceptor that tags each model request with
\texttt{workflow\_\allowbreak type\_\allowbreak id}, \texttt{workflow\_id}, and
\texttt{agent\_id}. Text-based coding-agent harnesses such as Claude
Code~\cite{claudecode}, Codex~\cite{codexcli}, and DeepSeek
Harness~\cite{dsh} normalize into the same arrival representation from
their own wire protocols, so \sys{} depends on neither the agent
loop nor the protocol.

\vspace{0.05in}\noindent\textbf{Serving integration.}
\sys{} hooks three lifecycle events on every normalized arrival:
stage arrival, the opening of an unresolved branch, and the canonical
request carrying its resolved selection. The hooks reuse the serving path's
existing queue, slot, and KV accounting, the same CSL manager, and the same
terminal settlement path; canonical requests and speculative fills execute
through the existing SGLang or vLLM executors~\cite{sglang,vllm}.
Disabling \sys{} bypasses the hooks and preserves the original semantics.

\begin{figure*}[!t]
\centering
\pgfplotsset{
  mutbar/.style={
    ybar, bar width=5pt, area legend,
    scale only axis, width=0.4230\textwidth, height=0.145\textwidth,
    ymin=0, ymax=1.12, ytick={0,0.5,1.0},
    ylabel={latency / floor}, ylabel style={font=\normalsize},
    symbolic x coords={routing,react,orch,menu},
    xtick={routing,react,orch,menu},
    xticklabels={\wl{Routing},\wl{ReAct},\wl{Sub-Agent},\wl{HCI}},
    tick label style={font=\normalsize}, enlarge x limits=0.18,
    xmajorgrids, ymajorgrids, grid style={black!18, line width=0.3pt},
    axis lines=box, axis line style={black!55, line width=0.25pt},
    xtick pos=bottom,
    tick style={black!65, line width=0.35pt}, tick align=outside,
    legend style={at={(0.999,-0.30)}, anchor=north, legend columns=7,
      font=\normalsize, draw=none, fill=none, /tikz/every even column/.append style={column sep=3pt}},
  },
  oursbar/.style={fill=blue!70!black, draw=blue!80!black},
  c1bar/.style={fill=blue!30, draw=blue!55!black},
  dspbar/.style={pattern=dots, pattern color=orange!85!black, draw=orange!70},
  spabar/.style={pattern=crosshatch, pattern color=orange!75!black, draw=orange!70},
  instbar/.style={fill=fmexec, draw=black!75},
  helibar/.style={pattern=north east lines, pattern color=teal!70!black, draw=teal!60},
  parrotbar/.style={pattern=horizontal lines, pattern color=black!75, draw=black!55},
}
\begin{tikzpicture}
\begin{axis}[mutbar, at={(0.049\textwidth,0.197\textwidth)}, anchor=south west,
  title={(a) low load \quad \textnormal{\normalsize 32B, 4$\times$H200}},
  title style={font=\normalsize\bfseries}, ytick pos=left, ymax=1.3,
  xticklabels=\empty, xlabel={}]
\addplot[oursbar] coordinates {(routing,0.456)(react,0.361)(orch,0.344)(menu,0.534)};
\addplot[c1bar] coordinates {(routing,0.523)(react,0.430)(orch,0.479)(menu,0.838)};
\addplot[dspbar] coordinates {(routing,0.942)(react,0.984)(orch,0.936)(menu,1.005)};
\addplot[spabar] coordinates {(routing,0.940)(react,0.993)(orch,0.946)(menu,1.017)};
\addplot[instbar] coordinates {(routing,0.739)(react,0.716)(orch,0.546)(menu,0.78)};
\addplot[helibar] coordinates {(routing,0.538)(react,0.419)(orch,0.486)(menu,0.816)};
\addplot[parrotbar] coordinates {(routing,1.049)(react,1.172)(orch,1.208)(menu,0.87)};
\addplot[dashed, gray!70, mark=none, sharp plot, forget plot] coordinates {(routing,1)(menu,1)};
\end{axis}
\begin{axis}[mutbar, at={(0.5047\textwidth,0.197\textwidth)}, anchor=south west,
  title={(b) at load \quad \textnormal{\normalsize 32B, 4$\times$H200}},
  title style={font=\normalsize\bfseries}, ylabel={}, ytick pos=right,
  yticklabel pos=right, ymax=1.3, xticklabels=\empty, xlabel={}]
\addplot[oursbar] coordinates {(routing,0.509)(react,0.487)(orch,0.363)(menu,0.540)};
\addplot[c1bar] coordinates {(routing,0.557)(react,0.489)(orch,0.516)(menu,0.828)};
\addplot[dspbar] coordinates {(routing,1.004)(react,1.012)(orch,0.971)(menu,1.017)};
\addplot[spabar] coordinates {(routing,0.991)(react,1.003)(orch,0.973)(menu,1.016)};
\addplot[instbar] coordinates {(routing,0.808)(react,0.826)(orch,0.579)(menu,0.783)};
\addplot[helibar] coordinates {(routing,0.596)(react,0.519)(orch,0.518)(menu,0.826)};
\addplot[parrotbar] coordinates {(routing,0.976)(react,1.067)(orch,1.244)(menu,1.044)};
\addplot[dashed, gray!70, mark=none, sharp plot, forget plot] coordinates {(routing,1)(menu,1)};
\end{axis}
\begin{axis}[mutbar, at={(0.049\textwidth,0)}, anchor=south west,
  title={(c) low load \quad \textnormal{\normalsize 8B, 4$\times$RTX4090}},
  title style={font=\normalsize\bfseries}, ytick pos=left, ymax=1.58,
  ytick={0,0.5,1.0,1.5}]
\addplot[oursbar] coordinates {(routing,0.634)(react,0.704)(orch,0.689)(menu,0.516)};
\addlegendentry{\sys}
\addplot[c1bar] coordinates {(routing,0.792)(react,0.840)(orch,0.831)(menu,0.842)};
\addlegendentry{C1-only}
\addplot[dspbar] coordinates {(routing,0.813)(react,0.938)(orch,0.799)(menu,1.018)};
\addlegendentry{DSP}
\addplot[spabar] coordinates {(routing,0.922)(react,1.018)(orch,0.856)(menu,0.993)};
\addlegendentry{SPAgent}
\addplot[instbar] coordinates {(routing,0.787)(react,0.849)(orch,0.819)(menu,0.827)};
\addlegendentry{InstCache}
\addplot[helibar] coordinates {(routing,0.797)(react,0.760)(orch,0.818)(menu,0.858)};
\addlegendentry{Helium$^{+}$}
\addplot[parrotbar] coordinates {(routing,1.150)(react,1.463)(orch,1.010)(menu,0.865)};
\addlegendentry{Parrot}
\addplot[dashed, gray!70, mark=none, sharp plot, forget plot] coordinates {(routing,1)(menu,1)};
\end{axis}
\begin{axis}[mutbar, at={(0.5047\textwidth,0)}, anchor=south west,
  title={(d) at load \quad \textnormal{\normalsize 8B, 4$\times$RTX4090}},
  title style={font=\normalsize\bfseries}, ylabel={}, ytick pos=right,
  yticklabel pos=right, ymax=1.58, ytick={0,0.5,1.0,1.5}]
\addplot[oursbar] coordinates {(routing,0.472)(react,0.454)(orch,0.688)(menu,0.544)};
\addplot[c1bar] coordinates {(routing,0.540)(react,0.565)(orch,0.760)(menu,0.856)};
\addplot[dspbar] coordinates {(routing,0.940)(react,1.134)(orch,0.932)(menu,1.039)};
\addplot[spabar] coordinates {(routing,0.896)(react,0.984)(orch,0.864)(menu,1.043)};
\addplot[instbar] coordinates {(routing,0.740)(react,0.720)(orch,0.793)(menu,0.862)};
\addplot[helibar] coordinates {(routing,0.530)(react,0.567)(orch,0.739)(menu,0.849)};
\addplot[parrotbar] coordinates {(routing,1.119)(react,1.270)(orch,1.042)(menu,0.864)};
\addplot[dashed, gray!70, mark=none, sharp plot, forget plot] coordinates {(routing,1)(menu,1)};
\end{axis}
\end{tikzpicture}
\caption{Open-loop mean latency, normalized to each workload's own
no-reuse floor (dashed $=1.0$). Rows are hardware; columns are load, at
$\lambda{=}0.12$ (\wl{HCI} $0.3$) and $0.8$\,req/s on H200 and at lower
per-workload rates on RTX~4090.}
\Description{Four grouped bar panels in two rows: the top row is
Qwen3-32B on four H200 GPUs at low and at load, the bottom row
repeats the same four workloads with Qwen3-8B on four commodity
RTX4090 GPUs. Each panel shows latency normalized to the no-reuse
floor for DynBranch, its C1-only arm, DSP, SPAgent, InstCache,
Helium, and Parrot across Routing, ReAct, Sub-Agent, and HCI.
A DynBranch arm is the lowest bar in every group of every panel.}
\label{fig:e4-mutpanels}
\end{figure*}

%% file: sections/eval.tex
\section{Evaluation}
\label{sec:eval}

We evaluate end-to-end latency, completion, and tail
(\S\ref{sec:eval:e2e}); how width, replicas, and depth trade GPU work for
latency (\S\ref{sec:eval:cost}); what reuse (C1), early fill (C2), and
each admission gate contribute (\S\ref{sec:eval:c1}); and how the system
holds up under rising and bursty load (\S\ref{sec:eval:c1c2-ablation}).
Task accuracy is checked against the floor in every experiment.

\subsection{Experimental Setup}

\noindent\textbf{Testbed.}
Our main testbed is a dedicated server with two AMD EPYC 9755 CPUs,
2.2\,TiB of main memory, and four NVIDIA H200 NVL GPUs with 141\,GiB
of HBM each.
Workers serve Qwen3-32B~\cite{qwen3} unless stated otherwise, and a
distilled Qwen3-0.6B drafts branch candidates.
The commodity replication in Figure~\ref{fig:e4-mutpanels}(c,d)
uses four distributed worker nodes on vast.ai, each serving Qwen3-8B
on one NVIDIA RTX 4090 with $24$\,GiB of device memory.

\vspace{0.05in}\noindent\textbf{Workloads.}
We instantiate the regimes in \S\ref{sec:bg:trace-shape} with:
\begin{list}{\labelitemi}{%
  \setlength{\leftmargin}{1.2em}%
  \setlength{\labelwidth}{0.8em}%
  \setlength{\labelsep}{0.4em}%
  \setlength{\itemindent}{0pt}%
  \setlength{\listparindent}{0pt}%
  \setlength{\itemsep}{0pt}%
  \setlength{\parsep}{0pt}%
  \setlength{\topsep}{2pt}}%
  \sloppy
  \item \textbf{\wl{Routing}:} BFCL function-calling records~\cite{patil2025bfcl},
        replayed with their handler choices and arguments.
  \item \textbf{\wl{ReAct}:} multi-hop questions from
        \textsc{MuSiQue}~\cite{musique} over a local HotpotQA-derived
        retrieval corpus~\cite{hotpotqa}.
  \item \textbf{\wl{Sub-Agent}:} \textsc{MuSiQue}-derived research sub-topics;
        the reuse-only run (Table~\ref{tab:c1-results}) uses the same entity distribution.
  \item \textbf{\wl{HCI}:} menu-offer/user-pick pairs from Schema-Guided
        Dialogue~\cite{sgd}, with paired, sampled think gaps between offer
        and selection~\cite{doherty,klm}; latency is measured on the user
        turn that follows the gap.
\end{list}

\vspace{0.05in}\noindent\textbf{Baselines.}
We compare five systems spanning completed-result
reuse~\cite{instcache,helium}, workflow scheduling, and
agent-level speculation. InstCache~\cite{instcache} pre-populates an
exact request cache from predicted instructions.
Helium$^{+}$~\cite{helium}\footnote{Helium$^{+}$ runs Helium's cache-aware
scheduling policy on the concurrent dispatch path and adds exact
per-operator result caching.} combines cache-aware scheduling
with exact operator-result reuse. Parrot~\cite{parrot} exposes workflow
dataflow as semantic variables. DSP~\cite{dsap} tunes speculative-plan
depth under a latency--cost objective. SPAgent~\cite{spagent} speculates
the next search-agent action and retains it in an in-request
buffer.

\vspace{0.05in}\noindent\textbf{Metrics.}
The reference throughout is a \emph{no-reuse floor}: the same stack with
speculation and cross-request reuse disabled. Open-loop experiments report
client-observed mean per-request wall time at arrival rate $\lambda$
(req/s) as the primary metric, with p99 characterizing the tail; serial comparisons report mean change against that
floor. Correctness is task accuracy relative to it; the absolute level is
workload-specific and not comparable across workloads. Error bars are
$\pm1$ s.d.\ over seeds.

\pgfplotsset{
  evalaxis/.style={
    axis lines=box, axis line style={black!55, line width=0.25pt},
    xtick pos=bottom,
    tick style={black!65, line width=0.35pt}, tick align=outside,
    tick label style={font=\normalsize}, label style={font=\normalsize},
    title style={font=\normalsize\bfseries, yshift=-1pt},
    xmajorgrids, ymajorgrids, grid style={black!18, line width=0.3pt},
  },
  evallegend/.style={
    font=\normalsize, draw=none, fill=none, inner sep=1pt,
    /tikz/every even column/.append style={column sep=4pt},
  },
  evalfloor/.style={color=black!78, mark=square*, mark size=1.35pt, thick},
  evalc1/.style={color=blue!45!black, mark=triangle*, mark size=1.55pt, thick},
  evalours/.style={color=blue!75!black, mark=*, mark size=1.45pt, very thick},
  evalsecondary/.style={color=orange!85!black, mark=square*, mark size=1.25pt,
    thick, dashed},
  evalreference/.style={color=black!48, mark=none, thick, densely dashed},
}

\subsection{End-to-End Comparison}
\label{sec:eval:e2e}

\sys{} achieves the lowest mean latency across all four workloads,
both load levels, and both hardware configurations
(Figure~\ref{fig:e4-mutpanels}). On 32B/H200, it improves over each
workload's strongest external baseline by $14$--$32\%$ at low load.
At load, the reductions are $15$--$31\%$ on \wl{Routing}, \wl{Sub-Agent}, and
\wl{HCI}, and $6.2\%$ on \wl{ReAct}. Across these cells, reductions against
the no-reuse floor span $46$--$66\%$. Task accuracy tracks the floor in
every experiment below: promotion is lossless by construction
(\S\ref{sec:design:lifecycle}).

\vspace{0.05in}\noindent\textbf{Early fill under load.}
C2 retains substantial benefit under load on \wl{Sub-Agent} and \wl{HCI},
reducing latency by $30\%$ and $35\%$ over C1-only, respectively;
\wl{Routing} gains $8.6\%$. On open-vocabulary \wl{ReAct}, the gain falls from
$16.0\%$ at low load to $0.4\%$ at load, while $53$--$55\%$ of
issued fills fail to match the resolved branch and are discarded at
both points. The similar mismatch rates accompany different latency
gains: under load the binding constraint is how much execution a fill can
still hide, not how often the branch is guessed right
(\S\ref{sec:eval:c1c2-ablation}).

\vspace{0.05in}\noindent\textbf{Completion and tail latency.}
Every arm completes the offered load with zero runtime failures.
Relative to the no-reuse floor on 32B/H200, \sys{} lowers p99 by
$11$--$33\%$ at load and $19$--$36\%$ at low load on \wl{Routing},
\wl{ReAct}, and \wl{Sub-Agent}; \wl{HCI} matches the floor at both points.

\vspace{0.05in}\noindent\textbf{Commodity replication.}
The benefit generalizes to the smaller-model, distributed deployment:
with Qwen3-8B on four RTX 4090s, \sys{} lowers mean latency by
$6.9$--$37.6\%$ relative to each cell's strongest external baseline
and $29.6$--$54.6\%$ relative to its no-reuse floor
(Figure~\ref{fig:e4-mutpanels}c,d). C2 improves over C1-only in every cell.

\subsection{Speculation Cost and Sensitivity}
\label{sec:eval:cost}

Width, replicas, and depth each buy latency with GPU work, and each
saturates; the benefit holds across model sizes and backbone families.

\vspace{0.05in}\noindent\textbf{Width.}
Wider fanout keeps lowering latency, but a growing share of what it adds is
thrown away.
At four workers, increasing fanout $K$ from $0$ to $3$ lowers mean
latency from $18.2$ to $15.2$\,s and adds $19.8$ GPU-seconds of
speculative work per request (Figure~\ref{fig:scale}a).
The reduction becomes statistically significant at $K{=}2$
(Welch's $t$-test~\cite{welch1947} against $K{=}0$).
The discarded share rises from $20\%$ at $K{=}1$ to $33\%$ at $K{=}3$.
Waste shifts from selected fills that fail to promote to fills for
unselected candidates.

Cancellation releases capacity promptly but cannot recover spent work:
abort is confirmed $3$\,ms after
branch resolution, by which point a cancelled fill has consumed $4.1$\,s of
engine time against $6.2$--$6.5$\,s for a completed one.

\vspace{0.05in}\noindent\textbf{Replicas.}
Replicas stop paying after the third worker.
Scaling from one to four workers with $K{=}W{-}1$ at fixed per-GPU load
reduces mean latency from $22.3$ to $15.2$\,s
(Figure~\ref{fig:scale}b). The fourth
worker saves only $0.25$\,s while adding
$1.9$ GPU-seconds of discarded work per request. Consumed work is
nearly flat once width reaches the workload's average of $2.67$
candidates per branch.
Task accuracy shows no regression across the seven width and replica
configurations.

\begin{figure}[b]
\centering
\pgfplotsset{
  usedbar/.style={fill=orange!30, draw=orange!70!black, line width=0.3pt},
  wastebar/.style={pattern=north east lines, pattern color=orange!85!black,
    draw=orange!70!black, line width=0.3pt},
  set layers,
  spendaxis/.style={
    scale only axis, width=0.307\columnwidth, height=0.27\columnwidth,
    ybar stacked, bar width=6pt,
    ymin=0, ymax=22, ytick={0,10,20},
    axis y line*=right, axis x line=none, ytick pos=right, ylabel near ticks,
    tick label style={font=\normalsize}, label style={font=\normalsize},
    axis line style={black!55, line width=0.25pt},
    tick style={black!65, line width=0.35pt},
    y label style={text=black},
    yticklabel style={text=black}},
}
\begin{tikzpicture}
\begin{axis}[spendaxis, at={(0.150\columnwidth,0)}, anchor=south west,
  xmin=-0.3, xmax=3.3, xtick=\empty,
  ytick=\empty, axis y line=none]
\addplot[usedbar]  coordinates {(0,0)(1,5.78)(2,11.60)(3,13.27)};
\addplot[wastebar] coordinates {(0,0)(1,1.44)(2,4.95)(3,6.56)};
\end{axis}
\begin{axis}[evalaxis, at={(0.150\columnwidth,0)}, anchor=south west,
  scale only axis, width=0.307\columnwidth, height=0.27\columnwidth,
  title={(a) fixed box},
  xlabel={fanout $K$}, ylabel={latency (s)},
  ytick pos=left, ymin=14, ymax=23, ytick={15,18,21},
  xmin=-0.3, xmax=3.3, xtick={0,1,2,3},
  axis line style={black!55, line width=0.25pt},
  tick style={black!65, line width=0.35pt},
  y label style={text=black},
  yticklabel style={text=black},
  legend style={evallegend, at={(1.041,-0.62)}, anchor=north,
    legend columns=3}]
\addplot[evalours] coordinates {(0,18.24)(1,17.35)(2,15.61)(3,15.23)};
\addlegendentry{latency}
\addlegendimage{area legend, usedbar}\addlegendentry{consumed}
\addlegendimage{area legend, wastebar}\addlegendentry{discarded}
\end{axis}
\begin{axis}[spendaxis, at={(0.482\columnwidth,0)}, anchor=south west,
  xmin=0.5, xmax=4.5, xtick=\empty,
  ylabel={spec.\ GPU-s / request}]
\addplot[usedbar]  coordinates {(1,0)(2,8.36)(3,13.22)(4,13.27)};
\addplot[wastebar] coordinates {(1,0)(2,1.78)(3,4.67)(4,6.56)};
\end{axis}
\begin{axis}[evalaxis, at={(0.482\columnwidth,0)}, anchor=south west,
  scale only axis, width=0.307\columnwidth, height=0.27\columnwidth,
  title={(b) box and $K$},
  xlabel={workers $W$ ($K{=}W{-}1$)},
  ytick pos=left, ymin=14, ymax=23, ytick={15,18,21}, yticklabels={,,},
  ytick style={draw=none},
  xmin=0.5, xmax=4.5, xtick={1,2,3,4},
  axis line style={black!55, line width=0.25pt},
  tick style={black!65, line width=0.35pt},
  yticklabel style={text=black}]
\addplot[evalours] coordinates {(1,22.30)(2,16.68)(3,15.48)(4,15.23)};
\end{axis}
\end{tikzpicture}
\caption{Speculation width and scale on \wl{Routing}. (a)~Fixed four-worker
box, $K$ swept. (b)~Box and $K$ grown together ($K{=}W{-}1$).}
\Description{Two panels sharing a latency axis on the left and a
speculative GPU-seconds axis on the right. Panel a holds the box at four
workers and raises the speculative fanout K from 0 to 3: mean latency
falls from 18.2 to 15.2 seconds while speculative engine time rises from
zero to 19.8 GPU-seconds per request, of which the discarded part grows
from a fifth to a third. Panel b scales the box and K together from one
to four workers: latency falls from 22.3 to 15.2 seconds for zero, 10.1,
17.9 and 19.8 GPU-seconds per request. Both panels finish at the same point.}
\label{fig:scale}
\end{figure}

\vspace{0.05in}\noindent\textbf{Depth.}
On \wl{ReAct}, speculating beyond one run-ahead cycle (one hop ahead)
does not pay. One cycle captures the useful overlap in the hop chain,
reducing median session wall time by $11.8\%$ and serving $60$ of
$96$ arrivals. Two or three cycles reach the additional hops but
consume no results from them; latency changes by at most $0.7\%$,
within the observed $2$--$5\%$ run-to-run spread. Deeper speculative
decisions finish too late to precede canonical work.

\vspace{0.05in}\noindent\textbf{Repetition and model size.}
More recurrence increases relative gain; larger serving models
increase absolute savings (Figure~\ref{fig:e5-repeat}). Raising
exact replay from $0.22$ to $0.77$ increases mean-latency speedup
from $1.8\times$ to $8.6\times$. Across Qwen3 0.6B--32B models, speedup
stays at $1.75$--$2.06\times$ while savings grow from $1.6$ to
$8.9$\,s per request.

\begin{figure}[b]
\centering
\begin{tikzpicture}
\begin{axis}[evalaxis, at={(0.150\columnwidth,0)}, anchor=south west,
  scale only axis, width=0.240\columnwidth, height=0.24\columnwidth,
  title={(a) repeat rate}, xlabel={repeat}, ylabel={speedup vs.\ floor},
  xmajorgrids, ytick pos=left,
  ymin=1, ymax=9.2, xmin=0.15, xmax=0.82,
  legend style={evallegend, at={(1.237,-0.66)}, anchor=north,
    legend columns=2,
    /tikz/every even column/.append style={column sep=4pt}}]
\addplot[evalours,
         error bars/.cd, y dir=both, y explicit,
         error bar style={blue!75!black, line width=0.45pt},
         error mark options={blue!75!black, mark size=1.6pt, line width=0.45pt}] coordinates
  {(0.22,1.84) +- (0,0.12) (0.31,2.37) +- (0,0.10) (0.41,3.00) +- (0,0.04)
   (0.52,3.86) +- (0,0.21) (0.65,5.92) +- (0,0.72) (0.77,8.63) +- (0,1.49)};
\addlegendentry{speedup (left)}
\addlegendimage{evalsecondary}
\addlegendentry{seconds saved (right)}
\end{axis}
\begin{axis}[at={(0.150\columnwidth,0)}, anchor=south west,
  scale only axis, width=0.240\columnwidth, height=0.24\columnwidth,
  xmin=0.15, xmax=0.82, ymin=0, ymax=20,
  axis y line*=right, axis x line=none, ytick pos=right, ylabel near ticks,
  axis line style={black!55, line width=0.25pt},
  tick style={black!65, line width=0.35pt},
  y label style={font=\normalsize, text=black},
  yticklabel style={font=\normalsize, text=black}]
\addplot[evalsecondary, error bars/.cd, y dir=both, y explicit,
         error bar style={orange!70!black, line width=0.45pt},
         error mark options={orange!70!black, mark size=1.6pt, line width=0.45pt}] coordinates
  {(0.22,8.79) +- (0,1.01) (0.31,11.17) +- (0,0.36) (0.41,13.28) +- (0,0.21)
   (0.52,14.89) +- (0,1.06) (0.65,15.81) +- (0,0.44) (0.77,16.73) +- (0,1.16)};
\end{axis}
\begin{axis}[evalaxis, at={(0.600\columnwidth,0)}, anchor=south west,
  scale only axis, width=0.240\columnwidth, height=0.24\columnwidth,
  title={(b) model size}, xmode=log, log basis x=10,
  xlabel={params},
  ymin=1, ymax=2.4, xmin=0.5, xmax=40,
  xtick={0.6,4,32}, xticklabels={0.6B,4B,32B},
  xmajorgrids, ytick pos=left]
\addplot[black!42, thick, densely dotted, mark=none, forget plot] coordinates
  {(0.5,1.86)(40,1.86)};
\addplot[evalours,
         error bars/.cd, y dir=both, y explicit,
         error bar style={blue!75!black, line width=0.45pt},
         error mark options={blue!75!black, mark size=1.6pt, line width=0.45pt}] coordinates
  {(0.6,1.75) +- (0,0.09) (1.7,1.77) +- (0,0.10) (4,1.86) +- (0,0.14)
   (8,2.06) +- (0,0.20) (14,1.86) +- (0,0.15) (32,1.85) +- (0,0.14)};
\end{axis}
\begin{axis}[at={(0.600\columnwidth,0)}, anchor=south west,
  scale only axis, width=0.240\columnwidth, height=0.24\columnwidth,
  xmode=log, log basis x=10, xmin=0.5, xmax=40,
  ymin=0, ymax=11, axis y line*=right, axis x line=none,
  ytick pos=right, ylabel near ticks, ylabel={seconds saved (s)},
  axis line style={black!55, line width=0.25pt},
  tick style={black!65, line width=0.35pt},
  y label style={font=\normalsize, text=black},
  yticklabel style={font=\normalsize, text=black}]
\addplot[evalsecondary, error bars/.cd, y dir=both, y explicit,
         error bar style={orange!70!black, line width=0.45pt},
         error mark options={orange!70!black, mark size=1.6pt, line width=0.45pt}] coordinates
  {(0.6,1.61) +- (0,0.15) (1.7,2.26) +- (0,0.14) (4,3.01) +- (0,0.28)
   (8,3.54) +- (0,0.30) (14,3.79) +- (0,0.46) (32,8.91) +- (0,1.09)};
\end{axis}
\end{tikzpicture}
\caption{Sensitivity on \wl{Routing} to (a)~exact-replay rate and
(b)~serving-model size (dotted: mean over sizes).}
\Description{Two small panels. Panel a shows speedup over the
no-reuse floor rising monotonically from 1.8x to 8.6x as the
repeat rate grows. Panel b shows speedup flat near 1.9x across
model sizes while the absolute saved seconds grow from about 1.6
seconds at 0.6B to 8.9 seconds at 32B.}
\label{fig:e5-repeat}
\end{figure}

\vspace{0.05in}\noindent\textbf{Backbone family.}
Speedup varies little with the served family.
Four alternative backbones (Figure~\ref{fig:e5-family}a) yield
$1.55$--$1.99\times$ speedup over their paired no-reuse floors across
three seeds each, with task accuracy within
$0.012$ of the floor.

\vspace{0.05in}\noindent\textbf{Drafter variants.}
Higher branch-prediction accuracy does not by itself improve end-to-end
latency: a prediction must also yield its result in time for demand.
Across 0.6B--8B drafters, mean-latency speedup
stays at $1.63$--$1.70\times$, with no statistically significant
difference between sizes across three seeds; every size improves over the
floor. The 8B drafter has the highest
top-1 accuracy over $140$ held-out branches
(Figure~\ref{fig:e5-family}b) and issues the most fills,
but has the lowest consumption rate, $50\%$ versus $70\%$ for 1.7B.

The drafter need not match the served model's family:
Llama-3.2-1B matches Qwen3-1.7B in top-1 accuracy, and Qwen2.5-3B
exceeds Qwen3-4B. Distilling 32B branch decisions into a 0.6B LoRA
drafter yields $0.34$ top-1 accuracy, exceeding the 8B drafter at a
little over half its decode cost (\texttt{+LoRA} in
Figure~\ref{fig:e5-family}b).

\begin{figure}[t]
\centering
\begin{tikzpicture}
\begin{axis}[evalaxis, at={(0.150\columnwidth,0)}, anchor=south west,
  scale only axis, width=0.3225\columnwidth, height=0.232\columnwidth,
  title={(a) backbone},
  ylabel={speedup vs.\ floor},
  symbolic x coords={Qwen3,Gemma3,OLMo2,Llama3,Yi1.5},
  xtick=data, xticklabel style={font=\normalsize, rotate=90, anchor=east},
  enlarge x limits=0.17, ymin=1, ymax=2.4, ytick={1,1.4,1.8,2.2},
  xmajorgrids=false, ytick pos=left]
\addplot[ybar, bar width=7pt, fill=fmwin, draw=orange!85!black, line width=0.3pt,
         error bars/.cd, y dir=both, y explicit,
         error bar style={black!70, line width=0.45pt},
         error mark options={black!70, mark size=1.6pt, line width=0.45pt}] coordinates
  {(Qwen3,1.84) +- (0,0.13) (Gemma3,1.93) +- (0,0.23) (OLMo2,1.99) +- (0,0.16)
   (Llama3,1.89) +- (0,0.18) (Yi1.5,1.55) +- (0,0.09)};
\addplot[evalreference, forget plot] coordinates {(Qwen3,1.84) (Yi1.5,1.84)};
\end{axis}
\begin{axis}[evalaxis, at={(0.498\columnwidth,0)}, anchor=south west,
  scale only axis, width=0.3225\columnwidth, height=0.232\columnwidth,
  title={(b) drafter}, xmode=log, log basis x=10,
  xlabel={drafter params}, ylabel={branch accuracy},
  xtick={0.6,4,8}, xticklabels={0.6B,4B,8B},
  xticklabel style={font=\normalsize},
  xmin=0.35, xmax=13, ymin=0.05, ymax=0.55,
  xmajorgrids, ytick={0.1,0.3,0.5},
  ytick pos=right, yticklabel pos=right, ylabel near ticks,
  ylabel style={font=\normalsize}, yticklabel style={font=\normalsize},
  axis line style={black!55, line width=0.25pt},
  tick style={black!65, line width=0.35pt}]
\addplot[evalc1] coordinates {(0.6,0.300)(1.7,0.207)(4,0.271)(8,0.414)};
\addplot[evalours] coordinates {(0.6,0.186)(1.7,0.171)(4,0.207)(8,0.314)};
\addplot[color=orange!85!black, mark=triangle*, mark size=1.9pt, only marks]
  coordinates {(0.6,0.457)};
\addplot[color=orange!85!black, mark=*, mark size=1.7pt, only marks]
  coordinates {(0.6,0.336)};
\node[font=\normalsize, text=orange!85!black, anchor=west] at (axis cs:0.80,0.505) {+LoRA};
\node[font=\normalsize, text=blue!45!black, anchor=west] at (axis cs:1.55,0.40) {top-3};
\node[font=\normalsize, text=blue!75!black, anchor=west] at (axis cs:1.55,0.105) {top-1};
\end{axis}
\end{tikzpicture}
\caption{Sensitivity to the models. (a)~Served backbone on \wl{Routing}
(dashed: Qwen3-32B). (b)~Drafter size on \wl{ReAct}.}
\Description{Two small panels. Panel a plots mean-latency speedup for
five served backbones, all between 1.55 and 1.99 times the floor, with
Yi-1.5-34B lowest. Panel b plots the drafter's top-1 and top-3 branch
accuracy against parameter count on a log axis, both rising to their
highest values at 8B; two isolated points mark the 0.6B drafter with a
LoRA adapter, sitting above the 8B point on both cut-offs.}
\label{fig:e5-family}
\end{figure}

\subsection{Ablation Study}
\label{sec:eval:c1}

C1 captures reuse across partially overlapping requests; C2 lowers
latency even with cross-request reuse disabled. Both preserve task
accuracy against their paired no-reuse floors. A final pair of
experiments ablates the two admission gates.

\vspace{0.05in}\noindent\textbf{Reuse alone.}
With C2 disabled, \sys{} reduces mean latency on partial-overlap
requests by $31\%$, matching Helium$^{+}$'s $30\%$ reduction;
InstCache~\cite{instcache} gains only $7\%$
(Table~\ref{tab:c1-results}). The workload mixes exact repeats,
shared sub-DAGs, and fresh negative controls; arms are paired by request
id on cold-started stacks. Exact repeats reach the cache-hit path; fresh requests yield
no reuse benefit. Across the mix, \sys{} reduces mean latency by
$32.2\%$, with zero unsafe splices. C1 thus captures reuse below the whole-request
boundary while retaining write-cascade invalidation.

\begin{table}[t]
\centering
\begin{tabular}{lrrr}
\toprule
\textbf{Arm} & \textbf{repeat} & \textbf{partial overlap} & \textbf{fresh} \\
\midrule
floor & $14.4$\,s & $14.8$\,s & $13.6$\,s \\
InstCache & $0.01$\,s & $13.8$\,s\ ($-7\%$) & $14.6$\,s \\
Helium$^{+}$ & $0.16$\,s & $10.3$\,s\ ($-30\%$) & $14.0$\,s \\
\textbf{\sys{}} & $\mathbf{0.01}$\,\textbf{s} & $\mathbf{10.2}$\,\textbf{s}\ ($\mathbf{-31\%}$) & $14.1$\,s \\
\bottomrule
\end{tabular}
\caption{Reuse alone by request type (Qwen3-32B, paired by request id);
\emph{partial overlap} = a sub-DAG shared with another record.}
\label{tab:c1-results}
\end{table}

\vspace{0.05in}\noindent\textbf{Early fill alone.}
With C1 disabled, \sys{} reduces mean latency by $8.5$--$47.7\%$
against the shared no-reuse floor across all four workloads on both
two- and four-GPU deployments (Table~\ref{tab:c2-results}). The baselines run their native admission, and
\sys{} outperforms DSP~\cite{dsap} and SPAgent~\cite{spagent}
throughout. These gains overlap with C1's: some filled stages would
also be served by reuse.

The gain extends beyond a single predicted action. On \wl{Routing},
$349$ fills yield $500$ consumed stages through chained
promotion. \wl{HCI} shows the largest separation: completed fills serve
$73\%$ of choices, cutting user-turn wait by $46.6$--$47.7\%$;
neither baseline materially benefits from the think gap.

\begin{table}[t]
\centering
\begin{tabular}{lrrrr}
\toprule
\textbf{System} & \textbf{\wl{Routing}} & \textbf{\wl{ReAct}} & \textbf{\wl{Sub-Agent}} & \textbf{\wl{HCI}} \\
\midrule
\multicolumn{5}{@{}l}{\emph{2 GPUs}} \\
\textbf{\sys} & $\mathbf{-18.6}$ & $\mathbf{-8.7}$ & $\mathbf{-34.4}$ & $\mathbf{-46.6}$ \\
SPAgent & $-12.3$ & $-0.7$ & $-23.6$ & $-1.0$ \\
DSP & $-13.1$ & $-0.4$ & $-23.8$ & $-0.7$ \\
\midrule
\multicolumn{5}{@{}l}{\emph{4 GPUs}} \\
\textbf{\sys} & $\mathbf{-21.9}$ & $\mathbf{-8.5}$ & $\mathbf{-34.0}$ & $\mathbf{-47.7}$ \\
SPAgent & $-10.6$ & $+0.7$ & $-25.0$ & $+1.6$ \\
DSP & $-11.2$ & $-0.2$ & $-24.7$ & $-0.2$ \\
\bottomrule
\end{tabular}
\caption{C2 isolation (serial closed-loop): mean $\Delta\%$ against the
shared no-reuse floor, C1 off on every arm.}
\label{tab:c2-results}
\end{table}

\vspace{0.05in}\noindent\textbf{Case study.}
Early fill removes four downstream stages from the critical path in
the paired \wl{Routing} request in Figure~\ref{fig:case}, saving $5.75$\,s
end to end. The request computes an area and then converts its units.
Two fills overlap router decoding on other workers, each completing
a handler and its executor before
demand, so they are served at zero engine time. The floor instead runs
all four after the decisions that gate them, for $6.50$\,s in total,
close to the observed end-to-end saving.

Both arms take the same route, and this request has the median paired
saving. Across the run, $99\%$ of promotions find a completed fill
and $1\%$ wait for one. Producer identities distinguish speculative
consumption from ordinary cache hits; attributed stage counts match
the system's own counter.

\begin{figure}[b]
\centering
\begin{tikzpicture}[x=0.285cm, y=1cm,
  dec/.style={fill=blue!43, draw=blue!75!black, line width=0.3pt},
  fdec/.style={fill=fmpale, draw=black!45, line width=0.3pt},
  fexec/.style={fill=fmexec, draw=black!80, line width=0.3pt},
  specfill/.style={fill=fmwin, draw=orange!80!black, line width=0.4pt, densely dashed},
  avoided/.style={draw=red!75!black, line width=0.6pt},
  lbl/.style={font=\normalsize, inner sep=0.5pt},
  lane/.style={font=\normalsize, anchor=east, inner sep=1pt}]
  \def\yD{1.30}\def\yF{0.72}\def\yL{-0.12}\def\hh{0.17}

  \node[lane] at (-0.4,\yD) {\textbf{\sys{}}};
  \fill[dec] (0.15,\yD-\hh) rectangle (6.76,\yD+\hh);
  \node[lbl] at (3.46,\yD) {\texttt{router$_0$}};
  \fill[dec] (6.76,\yD-\hh) rectangle (13.38,\yD+\hh);
  \node[lbl] at (10.07,\yD) {\texttt{router$_1$}};
  \fill[dec] (13.38,\yD-\hh) rectangle (14.18,\yD+\hh);
  \foreach \t in {6.76,13.38} {\draw[avoided] (\t,\yD-\hh-0.04) -- (\t,\yD+\hh+0.04);}
  \node[lbl, red!75!black, anchor=south] at (9.6,\yD+0.28)
    {\texttt{h}$_t$,\,\texttt{e}$_t$ served here --- $0$\,s engine time};
  \draw[red!75!black, line width=0.3pt] (7.4,\yD+0.26) -- (6.83,\yD+\hh+0.01);
  \draw[red!75!black, line width=0.3pt] (11.9,\yD+0.26) -- (13.31,\yD+\hh+0.01);

  \node[lane] at (-0.4,\yF) {\textit{fills}};
  \fill[specfill] (0.14,\yF-\hh) rectangle (2.86,\yF+\hh);
  \node[lbl] at (1.50,\yF) {$2.7$\,s};
  \fill[specfill] (6.74,\yF-\hh) rectangle (10.47,\yF+\hh);
  \node[lbl] at (8.61,\yF) {$3.7$\,s};
  \foreach \f/\t in {2.86/6.76, 10.47/13.38} {
    \draw[orange!70!black, line width=0.3pt]
      (\f,\yF) -- (\t,\yD-\hh-0.04);}
  \node[lbl, orange!75!black, anchor=north east] at (13.9,\yF-\hh-0.02)
    {ready $\sim$3\,s early};

  \node[lane] at (-0.4,\yL) {\textbf{floor}};
  \fill[fdec] (0.03,\yL-\hh) rectangle (6.64,\yL+\hh);
  \node[lbl] at (3.34,\yL) {\texttt{router$_0$}};
  \fill[fexec] (6.64,\yL-\hh) rectangle (9.43,\yL+\hh);
  \fill[fdec] (9.43,\yL-\hh) rectangle (15.56,\yL+\hh);
  \node[lbl] at (12.50,\yL) {\texttt{router$_1$}};
  \fill[fexec] (15.56,\yL-\hh) rectangle (19.27,\yL+\hh);
  \fill[fdec] (19.27,\yL-\hh) rectangle (19.93,\yL+\hh);
  \node[lbl, anchor=north] at (5.5,\yL-0.26) {\texttt{h}$_t$,\,\texttt{e}$_t$ run $6.50$\,s};
  \draw[black!55, line width=0.3pt] (7.6,\yL-0.25) -- (8.1,\yL-\hh-0.02);

  \draw[black!45, densely dashed, line width=0.3pt] (14.18,\yD-\hh) -- (14.18,\yL+\hh);
  \draw[black!45, densely dashed, line width=0.3pt] (14.18,\yL-\hh) -- (14.18,\yL-0.62);
  \draw[black!45, densely dashed, line width=0.3pt] (19.93,\yL-\hh) -- (19.93,\yL-0.62);
  \draw[<->, red!75!black, line width=0.5pt] (14.18,\yL-0.54) -- (19.93,\yL-0.54);
  \node[lbl, red!75!black, anchor=north] at (17.05,\yL-0.58) {$-5.75$\,s};

  \draw[black!55, line width=0.3pt] (0,\yL-0.86) -- (20.2,\yL-0.86);
  \foreach \t in {0,5,10,15,20} {
    \draw[black!55, line width=0.3pt] (\t,\yL-0.86) -- (\t,\yL-0.92);
    \node[lbl, anchor=north, inner sep=1.5pt] at (\t,\yL-0.92) {\t};}
  \node[lbl, anchor=north west, inner sep=1.5pt] at (20.9,\yL-0.92) {s};
\end{tikzpicture}
\caption{A paired \wl{Routing} request. Each decision (\texttt{router$_t$})
gates a handler and executor (\texttt{h}$_t$,\,\texttt{e}$_t$).}
\Description{Three lanes over 20 seconds. The system lane runs two router
stages back to back and an aggregator, finishing at 14.2 seconds, with
ticks where the four handler and executor stages were served without running. A
middle lane shows two speculative fills of 2.7 and 3.7 seconds, each beginning
as the router before it starts decoding and finishing about three seconds
before that router ends, on different workers. The floor lane runs the same two
routers plus all four handler and executor stages and finishes at 19.9 seconds.
An arrow between the two completion times is labeled minus 5.75 seconds.}
\label{fig:case}
\end{figure}

\vspace{0.05in}\noindent\textbf{Two-level attribution.}
Pricing both levels issues $8.3\times$ fewer fills than disabling
Action~0 and fixing Action~1, at indistinguishable latency
(Figure~\ref{fig:twolevel}a at $\lambda{=}0.80$; $n{=}155$, paired
within batch with arm order reversed between batches). Each level controls a distinct cost:
Action~0 cuts candidate-generation openings from $1.53$ to $0.35$ per request with
Action~1 priced, and from $1.42$ to $0.50$ with it fixed; Action~1
reduces fill execution under either setting of Action~0.

\vspace{0.05in}\noindent\textbf{Generation cost.}
Only Action~0 removes drafting cost; a fill threshold cannot. On
\wl{ReAct} with C1 disabled, we disable Action~0, so the drafter runs at
every branch, and replace Action~1 with fixed thresholds $p^*$. Across
these thresholds, fills per request vary $55\times$ ($0.09$--$4.99$), yet
drafting stays at $2.5$--$2.6$\,s per request (Figure~\ref{fig:twolevel}b). Against
$p^*{=}0.999$, which issues almost no fills, the fixed thresholds lower
latency by $2.5$--$3.5\%$, within noise (their CIs include zero); \sys{}
with both gates priced lowers it by $4.5\%$ (95\% CI $0.5$--$8.5\%$,
$n{=}174$ over three order-balanced seeds at $\lambda{=}0.12$) while
drafting $0.89$\,s, $2.8\times$ less. Task accuracy spans
$0.305$--$0.328$ across the sweep.

\begin{figure}[t]
\centering
\begin{tikzpicture}[ratio/.style={draw=purple!75!black, densely dashed}]
\begin{axis}[evalaxis, at={(0.150\columnwidth,0)}, anchor=south west,
  scale only axis, width=0.3325\columnwidth, height=0.30\columnwidth,
  title={(a) fills issued},
  xlabel={Action 0}, ylabel={fills / request}, ymode=log,
  ytick pos=left,
  symbolic x coords={A0 priced,A0 off}, xtick=data,
  xticklabels={priced,off},
  xticklabel style={font=\normalsize},
  ymin=0.26, ymax=4.6, ytick={0.3,0.5,1,2,3},
  yticklabels={.3,.5,1,2,3}, log ticks with fixed point,
  enlarge x limits=0.85,
  legend style={evallegend, at={(1.038,-0.47)}, anchor=north, legend columns=3,
    /tikz/every even column/.append style={column sep=2pt}}]
\draw[<->, ratio, line width=0.5pt]
  (axis cs:A0 priced,0.37) -- (axis cs:A0 priced,0.92);
\node[font=\normalsize, text=purple!75!black, anchor=east, inner sep=2pt]
  at (axis cs:A0 priced,0.583) {$2.8\times$};
\draw[<->, ratio, line width=0.5pt]
  (axis cs:A0 off,0.66) -- (axis cs:A0 off,2.74);
\node[font=\normalsize, text=purple!75!black, anchor=west, inner sep=2pt]
  at (axis cs:A0 off,1.345) {$4.7\times$};
\addplot[color=blue!75!black, mark=*, mark size=1.9pt, very thick]
  coordinates {(A0 priced,0.35) (A0 off,0.62)};
\addlegendentry{A1 priced}
\addplot[color=orange!85!black, mark=square*, mark size=1.7pt, thick]
  coordinates {(A0 priced,0.97) (A0 off,2.91)};
\addlegendentry{A1 fixed}
\draw[<->, ratio, line width=0.5pt, shorten >=2pt, shorten <=2pt]
  (axis cs:A0 priced,0.35) -- (axis cs:A0 off,2.91)
  node[midway, sloped, below, font=\normalsize, text=purple!75!black,
       inner sep=1.6pt] {$8.3\times$};
\end{axis}
\begin{axis}[evalaxis, at={(0.5075\columnwidth,0)}, anchor=south west,
  scale only axis, width=0.3325\columnwidth, height=0.30\columnwidth,
  title={(b) latency bought},
  xlabel={fills / request}, ylabel={$\Delta$ wall vs.\ control (\%)},
  xmode=log, log basis x=10,
  ytick pos=right, yticklabel pos=right, ylabel near ticks,
  ylabel style={font=\normalsize}, yticklabel style={font=\normalsize},
  axis line style={black!55, line width=0.25pt},
  tick style={black!65, line width=0.35pt},
  xmin=0.055, xmax=11, ymin=-11.8, ymax=5.2,
  xtick={0.1,1,10}, xticklabels={0.1,1,10},
  xticklabel style={font=\normalsize},
  ytick={-8,-4,0},
  legend style={evallegend, at={(-0.0376,-0.65)}, anchor=north, legend columns=2,
    /tikz/every even column/.append style={column sep=2pt}}]
\addplot[evalreference, forget plot] coordinates {(0.055,0)(11,0)};
\addplot[color=orange!85!black, mark=square*, mark size=1.7pt, only marks,
         error bars/.cd, y dir=both, y explicit,
         error bar style={black!45, line width=0.4pt},
         error mark options={black!45, mark size=1.3pt, line width=0.4pt}]
  coordinates {(0.95,-3.20) +- (0,3.83) (1.91,-3.48) +- (0,3.83)
               (4.99,-2.45) +- (0,3.80)};
\addlegendentry{constant $p^*$}
\addplot[color=blue!75!black, mark=*, mark size=2.4pt, only marks,
         error bars/.cd, y dir=both, y explicit,
         error bar style={blue!75!black, line width=0.5pt},
         error mark options={blue!75!black, mark size=1.4pt, line width=0.5pt}]
  coordinates {(1.57,-4.54) +- (0,3.99)};
\addlegendentry{\sys}
\addplot[color=black!55, mark=square*, mark size=1.7pt, only marks, forget plot]
  coordinates {(0.09,0)};
\node[font=\normalsize, text=black!55, anchor=north west] at (axis cs:0.070,-0.9)
  {$p^*{=}.999$};
\node[font=\normalsize, text=orange!85!black, anchor=north east] at (axis cs:10.5,4.9)
  {$2.5$--$2.6$\,s draft};
\node[font=\normalsize, text=blue!75!black, anchor=north] at (axis cs:1.57,-8.9)
  {$0.89$\,s draft};
\end{axis}
\end{tikzpicture}
\caption{Two-level admission on \wl{ReAct}. (a)~The $2\times2$ ablation.
(b)~The same gate vs.\ fixed thresholds (95\% CIs).}
\Description{Two panels. The left plots fills per request on a
logarithmic axis for the two by two of Action 0 and Action 1, as two
lines: with Action 1 priced, 0.35 rising to 0.62 as Action 0 is
disabled; with Action 1 fixed, 0.97 rising to 2.91. The vertical gaps
are 2.8 and 4.7 times. The right panel plots paired wall time against
the p-star 0.999 control on a logarithmic fill axis: the remaining arms
sit below the control by 0.5 to 0.95 seconds, the system lowest at
0.95 seconds and the only one whose confidence interval clears zero,
with the intervals of the firing arms overlapping each other.}
\label{fig:twolevel}
\end{figure}

\subsection{Stress Test}
\label{sec:eval:c1c2-ablation}

Under rising and bursty load, priced admission keeps the benefit that
fixed policies lose.

\vspace{0.05in}\noindent\textbf{Load decomposition.}
C1's saving follows the request mix rather than the rate; C2's follows
load.
Across five open-loop \wl{Routing} rates, C1 removes $40$--$48\%$ of
the floor latency; C2 removes a further
$0.25$--$1.39$\,s (Figure~\ref{fig:c1c2-ablation}). The floor,
C1-only, and full arms replay the same schedule, giving
$\Delta_{C1}=\text{floor}-\text{C1only}$ and
$\Delta_{C2}=\text{C1only}-\text{full}$; the three stack to the no-reuse
floor in Figure~\ref{fig:c1c2-ablation}a. All requests complete. C1-only and full have identical task accuracy.

C2's gain falls by $82\%$ from its peak as load rises but remains
positive at every measured rate. Fills issued fall from $1.4$ to
$0.9$ per request, while latency saved per fill falls by $72\%$;
the latter accounts for about three quarters of the narrowing.
The gain therefore depends on both admission volume and how much
execution each fill hides within the remaining resolution window.

\begin{figure}[t]
\centering
\begin{tikzpicture}
\begin{axis}[evalaxis, at={(0.130\columnwidth,0)}, anchor=south west,
  scale only axis, width=0.307\columnwidth, height=0.27\columnwidth,
  ybar stacked, bar width=7pt,
  title={(a) decomposition},
  ytick pos=left,
  symbolic x coords={0.12,0.3,0.55,0.8,1.2}, xtick=data,
  xticklabels={.12,.3,.55,.8,1.2},
  xlabel={$\lambda$ (req/s)}, ylabel={mean latency (s)},
  ymin=0, ymax=22.00, ytick={0,5,10,15,20},
  enlarge x limits=0.15, area legend,
  legend style={evallegend, at={(1.148,-0.60)}, anchor=north,
    legend columns=3,
    /tikz/every even column/.append style={column sep=4pt}}]
\addplot[fill=blue!45!black!75!white, draw=blue!60!black] coordinates
  {(0.12,8.81)(0.3,9.29)(0.55,9.87)(0.8,10.35)(1.2,12.02)};
\addlegendentry{\sys{}}
\addplot[fill=fmwin, draw=orange!85!black] coordinates
  {(0.12,1.28)(0.3,1.39)(0.55,1.20)(0.8,0.97)(1.2,0.25)};
\addlegendentry{$\Delta_{C2}$ fill}
\addplot[fill=fmreuse, draw=green!30!black] coordinates
  {(0.12,9.23)(0.3,9.06)(0.55,8.90)(0.8,9.02)(1.2,8.29)};
\addlegendentry{$\Delta_{C1}$ reuse}
\end{axis}
\begin{axis}[evalaxis, at={(0.528\columnwidth,0)}, anchor=south west,
  scale only axis, width=0.307\columnwidth, height=0.27\columnwidth,
  title={(b) load response},
  symbolic x coords={0.12,0.3,0.55,0.8,1.2}, xtick=data,
  xticklabels={.12,.3,.55,.8,1.2},
  xlabel={$\lambda$ (req/s)},
  ymin=0, ymax=1.60, ytick={0,0.5,1.0,1.5}, ytick pos=left,
  ytick style={orange!85!black, line width=0.35pt},
  yticklabel style={text=orange!85!black},
  enlarge x limits=0.15,
  legend style={evallegend, at={(-0.148,-0.79)}, anchor=north,
    legend columns=2,
    /tikz/every even column/.append style={column sep=4pt}}]
\node[font=\normalsize, text=orange!85!black, anchor=south west]
  at (axis cs:0.12,0.05) {$\Delta_{C2}$ (s)};
\addplot[evalsecondary] coordinates
  {(0.12,1.282)(0.3,1.388)(0.55,1.202)(0.8,0.973)(1.2,0.248)};
\addlegendentry{saved (s)}
\addlegendimage{color=black!72, mark=triangle*, mark size=1.55pt, thick}
\addlegendentry{fills / request}
\end{axis}
\begin{axis}[at={(0.528\columnwidth,0)}, anchor=south west,
  scale only axis, width=0.307\columnwidth, height=0.27\columnwidth,
  symbolic x coords={0.12,0.3,0.55,0.8,1.2}, xtick=\empty,
  ymin=0, ymax=1.6, enlarge x limits=0.15,
  axis y line*=right, axis x line=none, ytick pos=right, ylabel near ticks,
  ytick={0,0.5,1,1.5}, yticklabels={0,0.5,1,1.5},
  ylabel={C2 fills issued / request},
  tick label style={font=\normalsize}, label style={font=\normalsize},
  axis line style={black!55, line width=0.25pt},
  tick style={black!72, line width=0.35pt},
  y label style={text=black!72},
  yticklabel style={text=black!72}]
\addplot[color=black!72, mark=triangle*, mark size=1.55pt, thick] coordinates
  {(0.12,1.38)(0.3,1.39)(0.55,1.29)(0.8,1.20)(1.2,0.89)};
\end{axis}
\end{tikzpicture}
\caption{\wl{Routing} under load, three seeds. (a)~Latency decomposition.
(b)~What C2 saves and the fills it issues.}
\Description{Two panels over five arrival rates. Panel a decomposes
the no-reuse latency into full-system latency and the savings from C1
and C2. Panel b shows C2 latency saving rising from 1.28 to a peak of
1.39 seconds at 0.3 requests per second and then falling to 0.25, while
the speculative fills issued per request fall from 1.38 to 0.89.}
\label{fig:c1c2-ablation}
\end{figure}

\vspace{0.05in}\noindent\textbf{Dynamic adaptation.}
Priced admission lowers complete-trace mean latency to $8.09$\,s,
$5.7\%$ below \textbf{static} and $5.5\%$ below \textbf{always}
(Figure~\ref{fig:dynamic}b). Static opens
candidate generation only with at least three free workers and
queue depth at most two; always opens whenever a candidate exists.
Task accuracy is $0.977$ for \sys{} and static and $0.979$ for
always, with no failures.

The trace brackets a burst (B) and a reuse-heavy phase (C) with
low-load (A) and recovery (D) phases. B and C have the same arrival
rate, but C1 absorbs more work in C. The share of candidate-opening branches rises from
$43\%$ in B to $100\%$ in C, while the break-even threshold
(candidate-set mean, phase median) falls from $0.678$ to $0.380$
(Figure~\ref{fig:dynamic}a).

\begin{figure}[t]
\centering
\begin{tikzpicture}
\begin{axis}[evalaxis, at={(0.150\columnwidth,0)}, anchor=south west,
  scale only axis, width=0.3325\columnwidth, height=0.28\columnwidth,
  title={(a) decisions}, xlabel={phase},
  ylabel={normalized value},
  ytick pos=left,
  symbolic x coords={A,B,C,D}, xtick=data,
  ymin=0, ymax=1.34, ytick={0,0.5,1},
  legend style={at={(1.038,-0.50)}, anchor=north, legend columns=3,
    evallegend,
    /tikz/every even column/.append style={column sep=3pt}}]
\addplot[mark=triangle*, mark size=1.55pt, black!48, thick]
  coordinates {(A,0.574)(B,0.960)(C,0.936)(D,0.698)};
\addlegendentry{load $\bar\rho_H$}
\addplot[color=orange!85!black, mark=square*, mark size=1.25pt, thick]
  coordinates {(A,0.148)(B,0.678)(C,0.380)(D,0.142)};
\addlegendentry{candidate $p^*$}
\addplot[color=blue!75!black, mark=*, mark size=1.45pt, thick, dashed]
  coordinates {(A,1.000)(B,0.427)(C,1.000)(D,1.000)};
\addlegendentry{A0 open}
\end{axis}
\begin{axis}[evalaxis, at={(0.5075\columnwidth,0)}, anchor=south west,
  scale only axis, width=0.3325\columnwidth, height=0.28\columnwidth,
  title={(b) control cost}, xlabel={phase},
  ylabel={latency excess (\%)},
  ytick pos=right, yticklabel pos=right, ylabel near ticks,
  ylabel style={font=\normalsize}, yticklabel style={font=\normalsize},
  axis line style={black!55, line width=0.25pt},
  tick style={black!65, line width=0.35pt},
  symbolic x coords={A,B,C,D}, xtick=data,
  ymin=-5, ymax=27,
  legend style={evallegend, at={(-0.0376,-0.70)}, anchor=north,
    legend columns=3,
    /tikz/every even column/.append style={column sep=4pt}}]
\addplot[evalreference] coordinates {(A,0)(B,0)(C,0)(D,0)};
\addlegendentry{\sys{}}
\addplot[mark=square*, mark size=1.25pt, teal!70!black, thick, dashed,
         error bars/.cd, y dir=both, y explicit,
         error bar style={black!45, thin}] coordinates
  {(A,6.9) +- (0,2.1)(B,6.9) +- (0,1.7)(C,14.1) +- (0,10.0)(D,2.4) +- (0,3.3)};
\addlegendentry{static}
\addplot[mark=*, mark size=1.4pt, orange!85!black, very thick,
         error bars/.cd, y dir=both, y explicit,
         error bar style={black!45, thin}] coordinates
  {(A,0.4) +- (0,1.2)(B,7.9) +- (0,5.0)(C,10.3) +- (0,5.6)(D,2.5) +- (0,4.8)};
\addlegendentry{always}
\end{axis}
\end{tikzpicture}
\caption{Phased adaptation on \wl{Routing}. (a)~What the two levels decide.
(b)~What the simple controls cost.}
\Description{Two panels over phases A through D. The left shows load
peaking in the burst phase, where the candidate break-even threshold
rises to 0.678 and the share of candidate-opening branches falls to 43 percent.
With higher reuse at the same arrival rate, the threshold falls to
0.380 and that share returns to 100 percent. The right shows
the latency excess of the static and always-open controls over \sys{}.}
\label{fig:dynamic}
\end{figure}

\subsection{Limitations and Future Work}
\label{sec:discussion}

The evaluated workflows use homogeneous serving models, stable branch
coordinates before resolution, and side-effect-free speculative stages.
Benefit depends on recurrence, predictability, available overlap, and
spare capacity. Extensions include templates from emitted
plans~\cite{plancache}, transactional isolation and cross-resource
scheduling for stateful tools and heterogeneous multimodal
pipelines~\cite{murakkab,teola}, and speculative units ranging from
retrieval or prefill~\cite{specrag,predprefetch} to branches.

%% file: sections/related.tex
\section{Related Work}
\label{sec:related}

\noindent\textbf{LLM state and result caches.}
Whole responses are cached under exact or semantic query
matches~\cite{gptcache,instcache,scalm}, plan templates are
reused~\cite{plancache}, prefix or chunk KV state is retained, blended, or
routed~\cite{sglang,preble,characterai,cachedattention,pythia,pbkv,kvflow,cacheblend,cachecraft}, and tool outputs are
memoized~\cite{tvcache,mdpi_hier,cortex}. Continuum preserves KV state across tool-call
pauses using tool-aware lifetimes and program-level
scheduling~\cite{continuum}. These systems retain content or model
state whose identity is already known. \sys{} instead produces work
before the branch outcome exists, then promotes the exact selected
result with its dependency provenance for later reuse.

\vspace{0.05in}\noindent\textbf{Reusable computation and freshness.}
Semantic caching and answering queries using views established
sub-query reuse~\cite{semcache,aquv}; derived state can be maintained
incrementally~\cite{blakeley1986,ivmcount}, as Noria, ReadySet, and
Differential Dataflow do in dataflow
systems~\cite{noria,readyset,diffdataflow}. Freshness scheduling and
self-adjusting computation track when derived work remains
valid~\cite{webviewfresh,sac}. Adaptive query processing and multi-query
optimization revise and share execution plans~\cite{midquery,eddies,mqo}.
\sys{} adopts the same principle (reusable work needs an explicit
dependency and freshness contract) for an LLM subgraph whose dependencies are
external reads, not base tables.

\vspace{0.05in}\noindent\textbf{Agent-aware serving.}
Recent characterization finds heterogeneous resources, long idle
intervals, shifting bottlenecks, and cross-request tool redundancy in
agentic workloads~\cite{agentsysbench}. Agentix treats programs as
first-class scheduling units~\cite{autellix}, while a policy-driven
agent runtime exposes observation, prediction, scoring, and action hooks
between frameworks and engines, evaluating agent-aware KV prefetch and
retention~\cite{agentruntime}. Murakkab configures workflow
components across models and hardware under SLOs~\cite{murakkab}, and
Dyserve adapts per-node model, verifier, and backend choices under
load~\cite{dyserve}.

\vspace{0.05in}\noindent\textbf{Workflow and dataflow optimization.}
Helium couples proactive caching with workflow scheduling, while Halo
plans model residency for known DAGs~\cite{helium,halo}. FlowMesh
deduplicates operator work by content, and Lumilake merges identical
intermediate nodes across known jobs~\cite{flowmesh,lumilake}. Session
DAG and stage-pipeline systems make their inter-stage dependencies
explicit to the scheduler~\cite{parrot,teola,llmcompiler}. Agentix, Alto, and Conveyor add
program-aware preemption, nested-ancestry dataflow orchestration, and streamed
tool/decode overlap~\cite{autellix,alto,conveyor}. \sys{} supplies a
dependency not yet visible to the scheduler and returns the confirmed
result to the same dataflow, so these techniques remain complementary.

\vspace{0.05in}\noindent\textbf{Speculative agents and serving.}
Interactive Speculative Planning, dialogue-response prefetching,
Speculative Actions, DSP, SpecHop, SPORK, and SPEX predict future
actions, reasoning paths, or downstream work and verify the realized
path~\cite{ispplan,dialogueprefetch,specactions,dsap,spechop,spork,spex}.
Skim executes a site-template fast path and verifies its output before
falling back to the full agent~\cite{skim};
Sherlock instead overlaps a produced node's verification with its
downstream work~\cite{sherlock}. Tool-oriented systems
execute predicted calls~\cite{paste,spectool}; PASTE, SPAgent, IdleSpec,
B-PASTE, and DualSpec gate them on confidence, utility, capacity, or
semantic consistency~\cite{paste,spagent,idlespec,bpaste,dualspec}. PASTE is closest: beyond the hold-until-match of
\S\ref{sec:bg:challenges} it co-schedules tools with returning LLM
sessions, but still consumes the confirmed result within its originating
request. Speculate-with-Memory carries predictor
statistics and episodes across tasks~\cite{specmem}, rather than
persisting validated products. Token-level speculative decoding drafts and verifies individual
tokens~\cite{leviathan2023,medusa,eagle,distillspec}, and serving systems
adapt that draft work to load, batch, or
budget~\cite{turbospec,nightjar,adaserve,semiclair,onlinesd}.
A separate family lowers latency by changing what the model computes:
routing to a cheaper model~\cite{routellm,dyserve}, or bounding the
reasoning budget, which can change the answer
itself~\cite{s1scaling,cotfaithful}. \sys{} overlaps candidate subgraphs
with resolving decode; completed results serve later requests after
exact matching and freshness checks. \sys{} reprices each branch at
generation and every fill using live load and observed outcomes.

%% file: sections/conclusion.tex
\section{Conclusion}
\label{sec:conclusion}

\sys{} removes the branch-resolution barrier by treating an unresolved
branch as serving state: a coordinate names candidate work before the
branch resolves, an exact input match with fresh dependencies authorizes
it afterwards, and a two-level controller prices both against live load.
Across four agentic workloads it cuts mean latency by up to $32\%$ over
the strongest prior system and by $46$--$66\%$ against a no-reuse floor,
with workflow results unchanged; with reuse disabled, early fill alone
still removes $8.5$--$47.7\%$. The contract behind that result rests only on a named decision
point and a stage safe to run twice, so it carries to workflows whose
shape is decided at runtime, and priced admission sustains it from idle
to saturated load. Within those two requirements, predictors and pricing
policies can evolve without weakening exact reuse or freshness.

%% file: supplement/appendix.tex
\section{Guarantees and Complexity of Reuse and Admission}
\label{app:optimality}

This appendix derives the guarantees behind \sys{}. Two results
concern C1 and C2: the subgraph key is the coarsest reuse key that is
safe without a model of the stage, so it maximizes the reuse of
Equation~(\ref*{eq:reuse}), and promotion is correspondingly the most
permissive safe rule. For C3, we analyze Action~0's branch-level sign
decision, then establish exact per-candidate break-even, stepwise
optimality, and terminal maximality of Action~1's sequential repricing.
An exact ordered-subset oracle checks the latter.

\paragraph{Setting.}
Fix a realized execution domain $\mathcal E$. An execution $e\in\mathcal E$
of a speculable stage $v(e)$ renders an input $q(e)$,
reads external records at versions $D(e)$, and has semantic input
\[
  t(e)=\bigl(\mathrm{fp}(v(e)),q(e),D(e)\bigr),
\]
where the stage fingerprint covers its code, model, and execution
parameters. It produces $y(e)=F\bigl(t(e)\bigr)$, where $F$ is
deterministic: equal inputs and dependency versions give equal outputs
(\S\ref*{sec:design:lifecycle}). Write $x(e)=\bigl(t(e),z(e)\bigr)$ for
the metadata available before the stage executes, where $z$ is any
further recorded field, such as unrelated parts of the request. A keying
function $\kappa$ is \emph{input-based} if $\kappa(x)$ is a fixed
function of that metadata for every possible $F$: it neither executes nor
models $F$, and cannot inspect the stage's outputs or intermediate
results to decide a match. Every cache that does not model the stage it
serves is input-based in this sense.

Let $\mathcal F$ be the family of stage semantics the serving layer
admits---every deterministic function of $t$, since nothing
further about $F$ is known to it. Call $\kappa$ \emph{uniformly sound} if
for every $F\in\mathcal F$ and all $e_1,e_2\in\mathcal E$, $\kappa(x_1)=\kappa(x_2)$
implies $y(e_1)=y(e_2)$. Write
\[
  \kappa^\star(x)=\bigl(\mathrm{fp}(v),\,h(q),\,D\bigr)
\]
for \sys{}'s reuse name under its freshness guard
(\S\ref*{sec:design:smg}): the stage fingerprint, the rendered input
under a hash $h$ assumed collision-free on inputs in $\mathcal E$, and the
dependency versions the guard certifies.

\begin{lemma}[Soundness]
\label{lem:sound}
$\kappa^\star$ is uniformly sound.
\end{lemma}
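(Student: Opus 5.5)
The plan is to show directly that a collision of $\kappa^\star$ forces equality of semantic inputs $t$. Determinism of every $F\in\mathcal F$ then finishes the argument. Fix an arbitrary $F\in\mathcal F$ and executions $e_1,e_2\in\mathcal E$ with metadata $x_1,x_2$, and suppose $\kappa^\star(x_1)=\kappa^\star(x_2)$. Because $\kappa^\star$ is a tuple, its components agree: $\mathrm{fp}(v(e_1))=\mathrm{fp}(v(e_2))$, $h(q(e_1))=h(q(e_2))$, and the certified dependency versions coincide.

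The first step lifts each component equality to equality of the corresponding part of $t$.
\begin{itemize}
\item The fingerprint component appears in $t$ verbatim, so it needs no further work.
\item For the rendered input, I would use the standing assumption that $h$ is collision-free on inputs occurring in $\mathcal E$. This turns $h(q(e_1))=h(q(e_2))$ into $q(e_1)=q(e_2)$.
\item For the dependency component, the key records the versions the freshness guard certifies. I take these to be exactly the $D(e)$ read by the execution: the reverse-index cascade of \S\ref{sec:design:cascade} invalidates any entry whose read-set has changed, and uncertain cases fall back to a miss. So only a key whose third component equals the true $D(e_i)$ is ever compared, and $D(e_1)=D(e_2)$.
\end{itemize}
Together these give $t(e_1)=t(e_2)$.

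The second step applies $F$. Every member of $\mathcal F$ is a deterministic function of $t$ alone, and $y(e)=F(t(e))$, so $y(e_1)=F(t(e_1))=F(t(e_2))=y(e_2)$. Since $F$ was arbitrary, $\kappa^\star$ is uniformly sound. Nothing here depends on $z$: $\kappa^\star$ ignores it, and $F$ does not read it.

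The main obstacle is the dependency component. The lemma is only as strong as the identification of the guard-certified versions with the true $D(e)$, and that identification rests on the read-set being complete. I would state this explicitly as part of the setting, namely that $D(e)$ is the recorded read-set and the guard certifies it. I would also note that incomplete provenance only extends invalidation, as described in \S\ref{sec:design:cascade}, so it can produce misses but never a false match. Collision-freeness of $h$ is likewise an assumption rather than something proved, and I would cite it as given.
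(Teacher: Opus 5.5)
Your proof is correct and follows the paper's argument: collision-freeness of $h$ turns equality of $\kappa^\star$ into $t(e_1)=t(e_2)$, and determinism of $F$ then gives $y(e_1)=y(e_2)$. Your discussion of the dependency component is more than the proof needs. In the formal setting the third component of $\kappa^\star$ is by definition the same $D(e)$ that appears in $t(e)$, so tuple equality gives $D(e_1)=D(e_2)$ directly. The appeal to the cascade only motivates that modeling choice and is not a step of the proof.
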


\begin{proof}
$\kappa^\star(x_1)=\kappa^\star(x_2)$ gives $t(e_1)=t(e_2)$ because $h$
is collision-free on the realized inputs. Determinism of $F$ yields
$y(e_1)=y(e_2)$.
\end{proof}

\begin{lemma}[Coarsest sound key]
\label{lem:coarsest}
Let $\kappa$ be input-based and uniformly sound. Then
$\kappa(x_1)=\kappa(x_2)$ implies
$\kappa^\star(x_1)=\kappa^\star(x_2)$: the partition $\kappa$ induces
refines the one $\kappa^\star$ induces.
\end{lemma}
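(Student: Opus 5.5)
The plan is a contrapositive argument built on an adversarial choice of stage semantics. Suppose $\kappa$ is input-based and uniformly sound, and fix $e_1,e_2\in\mathcal E$ with $\kappa(x_1)=\kappa(x_2)$. Assume toward contradiction that $\kappa^\star(x_1)\neq\kappa^\star(x_2)$. Since $h$ is collision-free on realized inputs, $\kappa^\star$ separates exactly the executions whose semantic inputs differ. So this assumption is equivalent to $t(e_1)\neq t(e_2)$: either the fingerprints, the rendered inputs, or the certified dependency versions differ.

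The key step is to build one $F\in\mathcal F$ that distinguishes the two executions. Since $\mathcal F$ contains every deterministic function of $t$, I will take $F$ to be the indicator $F(t)=\mathbf 1[t=t(e_1)]$, or any injective map of $t$ into outputs. Then $y(e_1)=F(t(e_1))=1\neq 0=F(t(e_2))=y(e_2)$.

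Next I use that $\kappa$ is input-based. Its value $\kappa(x)$ is a fixed function of the pre-execution metadata $x=(t,z)$ that does not depend on $F$. So the equality $\kappa(x_1)=\kappa(x_2)$ still holds under this adversarial $F$. Uniform soundness, applied to this particular $F$, then forces $y(e_1)=y(e_2)$, which is a contradiction. Hence $t(e_1)=t(e_2)$, which gives $\kappa^\star(x_1)=\kappa^\star(x_2)$. Restated in terms of equivalence classes, every class of $\kappa$ lies inside a class of $\kappa^\star$, which is the claimed refinement.

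The main obstacle is justifying that $\kappa$ cannot change when $F$ changes, that is, making the phrase ``fixed function of that metadata for every possible $F$'' precise. I would state explicitly that the realized metadata $x(e)$, namely the fingerprint, the input, the versions and $z$, is determined before execution and does not depend on which $F$ is chosen. I would also address a subtlety: $\mathrm{fp}$ nominally encodes the stage's code, so one could object that changing $F$ should change $\mathrm{fp}$. I would handle this by reading $\mathcal F$ as the set of semantics consistent with the given fingerprint labels. The indicator function respects fingerprints, since it is simply a function of $t$, so the construction remains inside $\mathcal F$.
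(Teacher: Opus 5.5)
Your argument is correct and is essentially the paper's proof. Both go from $\kappa^\star(x_1)\neq\kappa^\star(x_2)$ to $t(e_1)\neq t(e_2)$, pick a deterministic $F$ of $t$ that separates the two executions, and use the fact that an input-based $\kappa$ is fixed independently of $F$, so uniform soundness on that $F$ rules out $\kappa(x_1)=\kappa(x_2)$. The only differences are that you argue by contradiction rather than contraposition, make the separating $F$ explicit as the indicator of $t(e_1)$, and add a remark on fingerprints that the paper leaves implicit.
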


\begin{proof}
By contraposition. Suppose $\kappa^\star(x_1)\neq\kappa^\star(x_2)$.
Then $t(e_1)\neq t(e_2)$. Because $\mathcal F$ contains every
deterministic function of $t$, it contains some $F$ with
$F\bigl(t(e_1)\bigr)\neq F\bigl(t(e_2)\bigr)$. Uniform soundness of
$\kappa$ on that $F$ forces $\kappa(x_1)\neq\kappa(x_2)$; because an
input-based $\kappa$ is fixed independently of $F$, it cannot merge the
pair for any other semantics either.
\end{proof}

\begin{corollary}[Maximal hit rate]
\label{cor:hitrate}
Over the same resident entries and subgraph at each decision point $j$,
$\kappa^\star$ maximizes every $P_j^{\mathrm{hit}}$ in
Equation~(\ref*{eq:reuse}) among uniformly sound input-based keys, and
hence maximizes the expected saving $\mathbb{E}[\Delta T_{\mathrm{reuse}}]$.
\end{corollary}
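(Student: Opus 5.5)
The plan is to derive the corollary directly from Lemma~\ref{lem:coarsest} by turning the partition refinement into a containment of hit events. First we make precise what ``over the same resident entries'' means. At decision point $j$, fix the set $\mathcal R_j$ of resident confirmed entries, each recorded from an execution $e'\in\mathcal E$ with metadata $x(e')$. Fix also the freshness guard, which is common to all keys compared. A later demand $e$ at $j$ is served from cache under key $\kappa$ exactly when some $e'\in\mathcal R_j$ has $\kappa(x(e))=\kappa(x(e'))$ and the recorded reads are still fresh. Call this the hit event $H_\kappa(e)$. Then $P_j^{\mathrm{hit}}(\kappa)=\Pr[H_\kappa(e)]$ under the request distribution, and this distribution does not depend on the key.

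The central step is to show $H_\kappa(e)\subseteq H_{\kappa^\star}(e)$ pointwise for any input-based, uniformly sound $\kappa$. Suppose the demand hits under $\kappa$ on entry $e'$. Then $\kappa(x(e))=\kappa(x(e'))$, so Lemma~\ref{lem:coarsest} gives $\kappa^\star(x(e))=\kappa^\star(x(e'))$. The freshness condition is shared by both keys, so $e$ also hits under $\kappa^\star$ on the same entry. Since $\kappa^\star$ is itself uniformly sound by Lemma~\ref{lem:sound}, and input-based by construction (it reads only $\mathrm{fp}$, $q$ and $D$ from the metadata), it lies in the competing class. Monotonicity of probability then gives $P_j^{\mathrm{hit}}(\kappa^\star)\ge P_j^{\mathrm{hit}}(\kappa)$ for every $j$, so $\kappa^\star$ attains the maximum.

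For the expected saving, we use Equation~(\ref*{eq:reuse}), in which $\mathbb E[\Delta T_{\mathrm{reuse}}]=\sum_j P_j^{\mathrm{hit}}T_j^{\mathrm{sub}}$. Each $T_j^{\mathrm{sub}}\ge 0$, and each is fixed by the subgraph rather than by the key, because we hold the subgraph at each decision point fixed. A nonnegative combination of termwise-maximal quantities is therefore maximal.

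The main obstacle is not the inequality itself but pinning down the comparison so that it is fair. The resident set, the eviction state and the freshness guard must be held identical across keys. Otherwise a coarser key could change which entries remain resident, since the CSL evicts by recency and frequency, and the comparison would break. I would handle this by stating explicitly that $\mathcal R_j$ is held fixed, as the corollary's hypothesis ``same resident entries'' allows. I would also note that the $D$ component of $\kappa^\star$ only compares versions that the shared guard already requires to match, so it excludes no hit that the guard admits.
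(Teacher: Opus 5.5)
Your proposal is correct and follows the paper's proof. Both arguments use Lemma~\ref{lem:coarsest}'s refinement to show that any demand served under a competing uniformly sound input-based key is also served by $\kappa^\star$, cite Lemma~\ref{lem:sound} to place $\kappa^\star$ in the class, and sum against the fixed $T_j^{\mathrm{sub}}$. Your version spells out what the paper leaves implicit: the pointwise hit-event containment on the same entry, that $\kappa^\star$ is itself input-based, that $T_j^{\mathrm{sub}}\ge 0$, and that the resident set and freshness guard are held common across keys.
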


\begin{proof}
A coarser partition serves weakly more demands, and by
Lemma~\ref{lem:coarsest} no uniformly sound input-based key is coarser
than $\kappa^\star$. Soundness is Lemma~\ref{lem:sound}. Each
$T_j^{\mathrm{sub}}$ is fixed, so maximizing every $P_j^{\mathrm{hit}}$
maximizes the sum.
\end{proof}

A whole-response key, which requires the entire request to match, and a
root-anchored prefix key, which requires every token up to $v$ to match,
both separate executions that $\kappa^\star$ merges whenever the
difference lies outside $v$'s rendered input. Each is therefore a strict
refinement on some workflow; Figure~\ref*{fig:ladder} exhibits one, where
they serve $0$ and $2$ of the six taken nodes against $\kappa^\star$'s
$4$. Corollary~\ref{cor:hitrate} says that $4$ is the top of that ladder
and not merely its highest built rung. \emph{Input-based} is necessary
rather than cosmetic: a key that modeled $F$ could merge distinct inputs
with equal outputs, but that lies outside the exact input-identity contract
of \S\ref*{sec:design:smg}.

\hypertarget{supp-c1}{}\begin{corollary}[Completeness of promotion]
\label{cor:complete}
Fix the shadows eligible under C2's lifecycle and C1's freshness and
provenance guards. Promotion---match on $\kappa^\star$ with a fresh
recorded read-set---is the most permissive sound rule over that set: no
demand--shadow pair that some uniformly sound input-based rule could
authorize is refused.
\end{corollary}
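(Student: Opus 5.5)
The plan is to reduce the corollary to Lemma~\ref{lem:coarsest} applied pairwise, with the shadow set held fixed. First I would model an authorization rule as a predicate $R(x_d,x_s)$ on a canonical demand $d$ and an eligible shadow $s$, where both are executions in $\mathcal E$ of the same speculable stage. I would call such a rule \emph{uniformly sound} if, for every $F\in\mathcal F$, $R(x_d,x_s)$ implies $y(e_d)=y(e_s)$. I would call it \emph{input-based} if it is a fixed function of the metadata $x$, exactly as for keys.

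Second, I would fix a pair refused by promotion. Over the fixed set of eligible shadows, with the freshness guard certifying the recorded versions, a refusal means $\kappa^\star(x_d)\neq\kappa^\star(x_s)$. From this I would conclude $t(e_d)\neq t(e_s)$. Either the fingerprints or the rendered inputs differ, with collision-freeness of $h$ making a hash mismatch reflect a true input mismatch, or the read-set is stale, so $D(e_s)\neq D(e_d)$. Then I would repeat the step of Lemma~\ref{lem:coarsest}. Because $\mathcal F$ contains every deterministic function of $t$, it contains some $F$ separating $t(e_d)$ from $t(e_s)$. Uniform soundness then forbids $R(x_d,x_s)$ for that $F$. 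Since $R$ is fixed independently of $F$, $R$ refuses the pair outright. This is the contrapositive of the claim.

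The main obstacle is that Lemma~\ref{lem:coarsest} is stated for keys, which are equivalence partitions, while an arbitrary rule $R$ need not be transitive or symmetric. I would therefore not route the argument through partitions at all. The separating-$F$ argument above is already pairwise, so it works for any predicate. A secondary point is to make the stale-read case explicit. A shadow whose recorded read-set is no longer fresh differs from the demand in its $D$ component, so it falls under the same separation argument rather than needing separate treatment. Finally, I would remark that soundness of promotion itself is Lemma~\ref{lem:sound}, so promotion is both sound and maximal among the rules considered.
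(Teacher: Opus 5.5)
Your argument is correct and is essentially the paper's: the paper proves the corollary by a one-line appeal to Lemma~\ref{lem:coarsest} on a pair that $\kappa^\star$ separates, while you inline that lemma's separating-$F$ step pairwise. Inlining has the small benefit of covering authorization rules that are not induced by a key, a point the paper's citation glosses. One minor slip: $h(q_d)\neq h(q_s)$ already forces $q_d\neq q_s$ for any function $h$, so collision-freeness is needed only for soundness (Lemma~\ref{lem:sound}), not for the direction you use it in.
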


\begin{proof}
A uniformly sound input-based rule admitting a pair that $\kappa^\star$
separates would contradict Lemma~\ref{lem:coarsest}. Every remaining
pair has equal keys and is authorized.
\end{proof}

With the soundness argument of \S\ref*{sec:design:lifecycle}, which shows
promotion never serves a value canonical execution would not have
produced, Corollary~\ref{cor:complete} brackets C2 from the other side.

\begin{proposition}[Minimal graph-based write cascade]
\label{prop:cascade}
Assume complete read sets and producer edges, and consider invalidation
rules using only this dependency graph, without modeling stage semantics.
For a write set $W$, let
$A$ contain its direct readers and every descendant that loses a valid
producer. The reverse-index cascade invalidates exactly $A$. If $E_A$
are the traversed producer edges and $R_A$ the dependency references
removed, it takes
$O(|W|+|A|+|E_A|+|R_A|)$ time and $O(|A|)$ auxiliary space. No uniformly
sound rule over this graph can invalidate a strict subset of $A$.
\end{proposition}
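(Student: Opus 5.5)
The proof has three parts: exactness of the cascade (it invalidates exactly $A$), the time and space bound, and minimality among uniformly sound graph-based rules.

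\emph{Model.} I would first fix a model. The dependency graph has external records as sources. Each cached or engine-resident entry points to the records in its read set, via the reverse index, and to the entries that produced its inputs, via producer edges. An entry may have several recorded producers; it stays valid while at least one producer, as recorded in its provenance, remains valid. I would define $A$ as a least fixpoint. Start with $A_0$, the direct readers of $W$. Then repeatedly add any entry all of whose recorded producers lie in the current set. The cascade is a worklist algorithm: it pushes every reverse-index reader of each $w\in W$; on popping an entry $a$, it marks $a$ invalid and traverses its outgoing producer edges; at each dependent it decrements a count of valid producers, removing the reference, and pushes the dependent when the count reaches zero.

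\emph{Exactness and cost.} Exactness is a standard induction on the fixpoint iteration, in both directions.
\begin{itemize}
\item Every pushed entry is in $A$, by induction on push order: a dependent is pushed only when its last valid producer has been popped.
\item Every entry of $A$ is pushed, by induction on the stage of the fixpoint at which it enters.
\end{itemize}
For the cost, each $w\in W$ is touched once, each entry of $A$ is pushed and popped once because it is marked on first push, each edge in $E_A$ is traversed once, and each reference in $R_A$ is removed once. This gives $O(|W|+|A|+|E_A|+|R_A|)$ time. The auxiliary space is the worklist plus the marks, both of size $O(|A|)$. The per-entry counters are part of the index, not auxiliary.

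\emph{Minimality (the main obstacle).} The hard part is showing that no uniformly sound graph-based rule can invalidate a strict subset of $A$. I would reuse the adversarial-semantics idea of Lemma~\ref{lem:coarsest}. Suppose a rule $\rho$ that sees only the graph leaves some $a\in A$ valid. Consider the first $a$ in fixpoint order that $\rho$ retains. Every producer of $a$ is either a direct reader of $W$ or an earlier member of $A$, so after the write $a$ has no valid producer. Now build a stage semantics $F\in\mathcal F$ and a post-write execution that change the value. Take $F$ to be the identity on the read versions $D$ and inputs, which is deterministic and hence admissible. Then the write to $w$ changes the direct reader's semantic input $t$ and therefore its output. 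That changed output propagates along the producer chain into the rendered input of $a$, so a fresh canonical execution of $a$ would yield a different $y$. Because $\rho$ depends only on the graph, which is identical under every $F$, it retains $a$ under this $F$ as well, and serving $a$ would return a stale value. This contradicts uniform soundness.

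The delicate point is the multi-producer case. There I would argue that with all producers invalidated, no valid provenance certifies $a$'s input, and the adversarial $F$ can be chosen so that every producer path changes its value simultaneously.
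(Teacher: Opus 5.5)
Your proposal is correct and follows essentially the same route as the paper. The reverse index plus a one-pass traversal give exactness and the $O(|W|+|A|+|E_A|+|R_A|)$ bound, and minimality comes from exhibiting a deterministic function in $\mathcal F$ (your identity on $t$) through which the write reaches each node of $A$, something a rule that sees only the graph cannot exclude. Your least-fixpoint definition of $A$ and counter-based worklist spell out what the paper compresses into one line. The ``first retained element'' step is unnecessary, and its claim that every producer of a direct reader lies in $A$ need not hold; the identity-$F$ staleness argument already applies to any retained $a\in A$.
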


\begin{proof}
The reverse index finds exactly the direct readers, and graph traversal
visits each affected node and edge once; removing a node retires each of
its recorded dependencies once. Every node in $A$ lies on a dependency
path from a changed record. For each such node, the admitted semantic
family contains a deterministic function in which that change reaches
its output. A graph-only rule cannot establish that the output is unchanged,
so retaining the node is not uniformly sound. A node outside $A$ either
has no such path or retains a valid producer, so the cascade need not
invalidate it. With
incomplete provenance, \sys{} may extend the cascade
conservatively; soundness remains, but minimality is not claimed.
\end{proof}

\hypertarget{supp-overlap}{}\begin{lemma}[Tight overlap bound]
\label{lem:overlap}
Fix the realized durations at branch $k$. Early execution removes at most
$\min(T_k^{\mathrm{gen}},T_k^{\mathrm{exec}})$ when the selected candidate
is covered and zero otherwise. Expectation over coverage and summation give
Equation~(\ref*{eq:saving}). Equality is attainable when an eligible fill
starts with the resolution window, runs without contention, and is consumed
immediately after resolution.
\end{lemma}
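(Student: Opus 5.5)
I would prove the statement in three parts: a per-branch upper bound on the removed latency, the expectation over coverage, and a construction showing equality can occur. All durations are held fixed throughout.

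\emph{Per-branch bound.} Fix branch $k$ and model its serial critical path as a resolution interval of length $T_k^{\mathrm{gen}}$ followed by selected downstream work of length $T_k^{\mathrm{exec}}$. The floor pays $T_k^{\mathrm{gen}}+T_k^{\mathrm{exec}}$. There are two cases.
\begin{itemize}
\item If the admitted set does not contain the selected candidate, no shadow carries the canonical reuse name. Authorization in \S\ref{sec:design:lifecycle} then yields a miss, and the demand executes normally after resolution, so the saving is zero. Speculation never shortens canonical work, because any contention only adds delay.
\item If the selected candidate is covered, its fill cannot start before the branch opens. By non-interference and the authorization rule, it cannot be consumed before resolution, since a canonical demand must present the reuse name. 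The branch therefore still finishes no earlier than $\max(T_k^{\mathrm{gen}},T_k^{\mathrm{exec}})$ after the branch opens. The floor finishes $T_k^{\mathrm{gen}}+T_k^{\mathrm{exec}}$ after opening, so the saving is at most $T_k^{\mathrm{gen}}+T_k^{\mathrm{exec}}-\max(\cdot)=\min(T_k^{\mathrm{gen}},T_k^{\mathrm{exec}})$. Equivalently, the executed portion hidden inside the window is bounded by both the window length and the work length.
\end{itemize}

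\emph{Expectation and summation.} Let $\mathbf 1_k^{\mathrm{cov}}$ indicate coverage at branch $k$. The per-branch bound reads $\Delta T_k\le \mathbf 1_k^{\mathrm{cov}}\min(T_k^{\mathrm{gen}},T_k^{\mathrm{exec}})$. The durations are fixed, so taking expectations gives $P_k^{\mathrm{cov}}\min(\cdot)$. Removing latency at distinct branches along the critical path is additive at most, because savings on different branches cannot overlap in wall-clock beyond their own intervals. Summing over $k$ and using linearity yields Equation~(\ref{eq:saving}).

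\emph{Tightness.} For the equality case, I would take a covered branch whose eligible fill is submitted at the same instant the resolution window opens. I assume no contention, so it runs for exactly $T_k^{\mathrm{exec}}$, and that the canonical demand consumes it with zero authorization delay.
\begin{itemize}
\item If $T_k^{\mathrm{exec}}\le T_k^{\mathrm{gen}}$, the shadow is complete at resolution and promoted immediately. The whole $T_k^{\mathrm{exec}}$ is removed.
\item Otherwise, the canonical demand waits on the running fill, as the bounded-wait rule permits, and receives the result at time $T_k^{\mathrm{exec}}$. This saves exactly $T_k^{\mathrm{gen}}$.
\end{itemize}
With coverage probability one, or with equality holding on every covered outcome, the expectation matches the bound.

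\emph{Main obstacle.} The delicate step is the additivity across branches. A saving at branch $k$ shifts later windows, and one could worry that the shifts interact. I would handle this by making the critical path a sequence of branch segments and applying the per-branch bound segment-wise. Because each segment's durations are fixed and each bound is local, the total reduction is the sum of the segment reductions.
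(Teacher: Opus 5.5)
Your proposal is correct and follows the paper's argument: you bound the per-branch saving by $\min(T_k^{\mathrm{gen}},T_k^{\mathrm{exec}})$ when the selected candidate is covered and by zero otherwise, multiply by the coverage indicator and take expectations, and attain equality with the stated schedule. Your $T_k^{\mathrm{gen}}+T_k^{\mathrm{exec}}-\max(\cdot)$ form is the same overlap fact, and the segment-wise summation spells out a step the paper's one-line proof leaves implicit; the equality case with $T_k^{\mathrm{exec}}>T_k^{\mathrm{gen}}$ also needs the bounded wait to be at least $T_k^{\mathrm{exec}}-T_k^{\mathrm{gen}}$, which you should state.
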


\begin{proof}
Execution cannot overlap more than either the window or its own duration.
The stated schedule attains their minimum; multiplying by the coverage
indicator and taking expectations gives the bound.
\end{proof}

\paragraph{Reuse-path complexity.}
For rendered-input size $|q|$, returned payload size $|y|$, and a read set
of size $r$, content-key construction costs $O(|q|)$, an indexed probe is
expected $O(1)$, deep-copying a hit costs $O(|y|)$, and constructing the
sorted read-set version costs $O(r\log r)$. For $K_b$ shadows and $R_b$
dependency references on branch $b$, its index makes branch resolution and
promotion $O(K_b+R_b)$ rather than linear in all shadows. Exceeding the
CSL's capacity can add an eviction scan over the cache. Aside from payloads, $N$ entries with $R$
dependency references use $O(N+R)$ metadata; caps and TTLs bound this state.

\hypertarget{supp-action0}{}\paragraph{The generation gate (Action 0).}
Action~0 learns one net return per branch
(Equation~(\ref*{eq:plane-target})), charging setup and enabled fills
once and crediting consumption only after a C1 miss. An unproductive
expansion retains its setup charge; mutually exclusive candidates do not
receive separate optimism bonuses. Write
$m_c(x)=x^\top A_c^{-1}q_c$ and
$r_c(x)=\hat\sigma_c\sqrt{x^\top A_c^{-1}x}\ge0$.
After cold trials, unaudited admission opens iff $m_c(x)+r_c(x)>0$.

\begin{proposition}[Conditional Action 0 sign guarantee]
\label{prop:actionzero}
Let $\mu_c(x)=\mathbb E[\hat y_b\mid c,x]$. At a trained, unaudited
decision, $\mu_c(x)\le m_c(x)+r_c(x)$ makes every refusal satisfy
$\mu_c(x)\le0$. If also $\mu_c(x)\ge m_c(x)-r_c(x)$, then
$m_c(x)-r_c(x)>0$ certifies positive conditional net return.
\end{proposition}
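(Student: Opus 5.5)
The plan is to read Proposition~\ref{prop:actionzero} as two elementary implications about the gate's decision rule and prove each one separately. The only fact needed about the gate is the one recorded just before the statement. At a trained, unaudited decision (after the $n_0$ cold trials, with \textsc{Audit} returning false), Action~0 opens if and only if $m_c(x)+r_c(x)>0$. So a \emph{refusal} at such a decision means exactly $m_c(x)+r_c(x)\le 0$. I would state this reduction first, citing line~3 of Algorithm~\ref{alg:admission} and Equation~\eqref{eq:enabling}, where $\hat U_0(c,x)=m_c(x)+r_c(x)$ by the definitions of $m_c$ and $r_c$.

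For the first claim, assume the upper-confidence hypothesis $\mu_c(x)\le m_c(x)+r_c(x)$. Take any trained, unaudited refusal. By the reduction it has $m_c(x)+r_c(x)\le0$, and chaining the two inequalities gives $\mu_c(x)\le m_c(x)+r_c(x)\le 0$. I would then interpret this in terms of Equation~\eqref{eq:plane-target}: $\mu_c(x)$ is the conditional expectation of the full-path net return $\hat y_b$. Because $\hat y_b$ already charges $T_{\mathrm{setup},b}$, $\hat K_b$ and $\hat W_b$, a refusal under the hypothesis never forgoes a branch with positive expected net benefit. Audited decisions and cold trials are excluded by the statement's qualifier, and I would remark on this explicitly so the guarantee is not misread as covering exploration.

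For the second claim, add the lower-confidence hypothesis $\mu_c(x)\ge m_c(x)-r_c(x)$. If $m_c(x)-r_c(x)>0$, then $\mu_c(x)\ge m_c(x)-r_c(x)>0$ directly. Since $r_c(x)\ge0$, this condition also forces $m_c(x)+r_c(x)\ge m_c(x)-r_c(x)>0$, so such a context is in fact opened by the gate. I would note this in passing, because it shows the certificate is consistent with the decision rule.

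No step is technically hard. The main obstacle is scoping: the result is conditional, and it must not be presented as a high-probability bound. I would not try to derive the hypotheses from LinUCB self-normalized concentration. The targets $\hat y_b$ are revised by delayed events, $\hat\sigma_c$ is an empirical RMS rather than a sub-Gaussian parameter, and $A_c$ starts from the prior $w_{0,c}$ and $I$ rather than standard regularization. Instead, I would close by noting that whenever a confidence bound in the style of \cite{li2010linucb} holds with some probability, both conclusions inherit that probability on the same event.
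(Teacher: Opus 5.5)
Your proof is correct and matches the paper's: at a trained, unaudited decision a refusal is exactly $m_c(x)+r_c(x)\le 0$, which chained with the assumed upper bound gives $\mu_c(x)\le 0$, and the lower bound gives the certificate $\mu_c(x)\ge m_c(x)-r_c(x)>0$ directly. One small slip in your closing remark: the prior $w_{0,c}$ initializes $q_c$, not $A_c$, which starts from $I$.
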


\begin{proof}
Refusal means $m_c(x)+r_c(x)\le0$; apply the assumed upper bound.
The lower bound gives the second claim.
\end{proof}

The radius scales with observed target RMS, not a calibrated confidence
bound; Proposition~\ref{prop:actionzero} is conditional. Cold trials and
seeded audits gather feedback outside this sign decision. Pending trials
count toward the cold budget and register a branch token before expansion;
delayed feedback revises that trial's target and $q_c$ without adding
another observation to $A_c$.

\hypertarget{supp-action1}{}\paragraph{The fill gate (Action 1).}
Consider one exclusive branch with candidates $1,\dots,n$ and the
estimates of \S\ref*{sec:design:admission}:
$\tilde p_s=\tilde\pi_s^{\mathrm{sel}}\bar\pi_s^{\mathrm{use}}\in[0,1]$ for useful
consumption, $\hat G_s^{\Delta}\ge0$ for the execution its consumption
saves after C1 lookup, and $C_s=C_s(V_H)\ge0$ for the delay its execution
imposes when it is \emph{not} consumed. At projected state $V_H$, write
$U_s(V_H)=\tilde p_s\hat G_s^{\Delta}-(1-\tilde p_s)C_s(V_H)$.

\begin{proposition}[Stepwise optimality and maximality]
\label{prop:fillgate}
At any fixed $V_H$ with $\hat G_s^{\Delta}+C_s(V_H)>0$, $U_s(V_H)>0$
exactly when $\tilde p_s>C_s(V_H)/(\hat G_s^{\Delta}+C_s(V_H))$,
the threshold $p^\ast$ of Equation~(\ref*{eq:breakeven}).
If both terms vanish, $U_s=0$ and the strict rule rejects the candidate;
the break-even ratio $0/0$ is undefined. Excluding seeded audits,
each Action~1 admission maximizes immediate utility among the candidates
that pass hard feasibility and atomic reservation at the current projected
state. Every admitted fill is therefore value-positive at reservation. If
Action~1 stops because the maximum utility is non-positive, no remaining
candidate can acquire positive utility after any further admissions.
\end{proposition}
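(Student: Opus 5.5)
The plan is to split the statement into its four claims and prove them in order: the break-even equivalence, the degenerate case, stepwise optimality, and terminal maximality. The first three are short. The fourth needs a monotonicity argument about how projected load evolves.

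\emph{Break-even and the degenerate case.} Fix $V_H$ and abbreviate $G=\hat G_s^{\Delta}\ge0$, $C=C_s(V_H)\ge0$, $p=\tilde p_s\in[0,1]$. Expanding gives $U_s=pG-(1-p)C=p(G+C)-C$. When $G+C>0$, dividing by this positive quantity preserves the strict inequality, so $U_s>0$ iff $p>C/(G+C)=p^\ast(s,V_H)$, which is exactly Equation~(\ref*{eq:breakeven}). When $G=C=0$, the same expansion gives $U_s=0$ identically in $p$. The strict rule $U_s>0$ therefore rejects the candidate, and $p^\ast$ is the undefined ratio $0/0$.

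\emph{Stepwise optimality.} I will read this directly off Algorithm~\ref{alg:admission} with audits excluded, so that $probe=0$ throughout. At each iteration the loop selects $s=\arg\max_{a\in\mathcal A}U_a(V')$ at the current projected state $V'$. It admits $s$ only if $U_s(V')>0$ (otherwise it breaks) and only if $s$ passes \textsc{HardFeasible} and \textsc{Reserve}. If $s$ fails those checks, it is dropped from $\mathcal A$ and the next argmax is taken at the same $V'$, because $V'$ changes only on admission. Hence the admitted candidate maximizes utility among the candidates that pass feasibility and reservation at that state, and its utility is positive at reservation. This yields both the stepwise claim and value-positivity.

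\emph{Terminal maximality.} I expect this to be the main obstacle. The claim is that $U_a(V')$ is non-increasing along any sequence of further admissions. I would establish three facts.
\begin{itemize}
\item $\tilde p_a$ and $\hat G_a^{\Delta}$ are fixed during a single Action~1 invocation. Posteriors are updated only by feedback events, which occur after submission, not inside the loop.
\item \textsc{ProjectLoad} only adds the admitted fill's nonnegative footprint $\hat{\mathbf R}_s$ to the occupancies and to the horizon pressure $\bar\rho_H$. Hence $V'$ increases componentwise.
\item The price curves $\boldsymbol\beta$ are monotone non-decreasing in load, and $\hat{\mathbf R}_a\ge0$. Therefore $C_a(V')=\boldsymbol\beta(V')^{\!\top}\hat{\mathbf R}_a$ is non-decreasing.
\end{itemize}
Since $1-\tilde p_a\ge0$, it follows that $U_a$ is non-increasing in further admissions. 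So if $\max_a U_a(V')\le0$ when the loop stops, every remaining candidate stays non-positive under any continuation.

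The delicate point is that these monotonicity properties are modelling assumptions stated in \S\ref*{sec:design:admission} ("monotone price curves", projection "projects its pressure"), not consequences of anything derived earlier. I will state them explicitly as the hypotheses the proof uses. I will also note that hard feasibility can only shrink as reservations accumulate, which does not affect the sign argument.
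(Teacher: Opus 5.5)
Your proposal is correct and follows the paper's proof step for step: rearranging $U_s=\tilde p_s(\hat G_s^{\Delta}+C_s)-C_s$ for the threshold, reading stepwise optimality off the argmax-with-discard loop at a fixed projected state, and getting terminal maximality from componentwise-increasing projected load, monotone price curves, and estimates held fixed within the decision. Add the footprint $\hat{\mathbf R}_a$ to your list of quantities fixed within the decision, since the paper assumes this too. Your remark that hard feasibility only shrinks is more than harmless to the sign argument: it is what extends the stepwise claim to candidates discarded at an earlier projected state, a point the paper leaves implicit.
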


\begin{proof}
When $\hat G_s^{\Delta}+C_s(V_H)>0$, rearranging $U_s(V_H)>0$ gives
the threshold; equality gives $U_s(V_H)=0$. In particular,
$C_s=0<\hat G_s^{\Delta}$ gives threshold $0$, and
$\hat G_s^{\Delta}=0<C_s$ gives threshold $1$.
When both vanish, $U_s=0$ directly. The algorithm examines candidates in
decreasing current utility, discards reservation failures, and admits the first feasible
positive one. After an admission, \textsc{ProjectLoad} only increases slot
and KV occupancy. The price curves are componentwise monotone, while
$\tilde p_s$, $\hat G_s^{\Delta}$, and $\hat{\mathbf R}_s$ remain fixed
within the decision; hence every remaining $C_s$ can only rise and every
$U_s$ can only fall. The stopping condition is therefore preserved in all
states reachable by further admissions.
\end{proof}

The proposition characterizes the implemented dynamic process: its next
choice maximizes current feasible utility, and its price-based terminal set
is maximal---no positive-utility fill can be appended. These per-step
properties do not imply maximum total utility over all feasible orders;
Appendix~\ref{app:evidence} compares that total with an exact oracle.

\paragraph{Admission complexity.}
For $K$ candidates and context dimension $d$, Action~0 observation costs
$O(d^2)$; revising a delayed target costs $O(d)$ without changing $A_c$
or the sample count. The direct matrix inversion used to score a context
costs $O(d^3)$ and stores $O(d^2)$ per branch class ($d=6$ here).
Action~1 rescans the
remaining candidates after every projected reservation, taking $O(K^2)$
time and $O(K)$ space.

Appendix~\ref{app:evidence} uses exact dynamic programming over all ordered
subsets in $756$ archived snapshots. Among the $639$ with positive optimum,
the implemented rule selects an exact-oracle set in $626$ ($98.0\%$) and
retains $99.36\%$ of aggregate sequence-aware oracle utility.
With a binding footprint budget, global sequence optimization contains
$0/1$ knapsack even at zero price: for $K$ mutually exclusive candidates,
set $\tilde p_s=1/K$, $\hat G_s^{\Delta}=K v_s$, and KV footprint $w_s$
for each item, with budget $B$ and all other limits slack. Each feasible
subset then has value $\sum_s v_s$ and weight $\sum_s w_s\le B$,
so the problem is NP-hard. For this projected-state update, exact
ordered-subset dynamic programming takes $O(K2^K)$ time and $O(2^K)$ space.

\paragraph{Scope.}
The Action~1 guarantees apply to estimated utility over revealed
candidates. Action~0's sign guarantee concerns its aggregate branch return
under the stated bound; it makes no claim about unrevealed candidates.
The gates' end-to-end latency effect is evaluated in
\S\ref*{sec:eval:c1}--\S\ref*{sec:eval:c1c2-ablation}.

%% file: supplement/evidence.tex
\section{Checks from Archived Runs}
\label{app:evidence}

We use archived artifacts---recorded outputs and admission snapshots, no
new serving runs---to check deterministic execution and to replay the
implemented sequential repricing rule against an exact ordered-subset
oracle. The replay script, input hashes, and per-decision results will
be released with the artifact. The replay holds Action~0 fixed: closed branches reveal no
candidate set for its exact oracle; Figure~\ref*{fig:twolevel}
reports the two-level ablation.

\paragraph{Deterministic execution.}
A three-seed \emph{ReAct} reproducibility run enables deterministic
inference. In its no-reuse
floor, $38$ repeated-question groups contain $96$ executions and $87$
within-group pairs. Every pair has identical final answer text, and
every group has an identical hop count. Across the full and C1-only
arms, hop counts also agree on all $116$ paired question IDs.
All nine request artifacts match their manifest hashes, which supports
Lemma~\ref{lem:sound}'s determinism premise in this configuration.

\paragraph{Sequential-admission replay.}
We use all $756$ admission snapshots from a three-seed \emph{Routing} run,
including nine startup warmup events. The remaining
$747$ events join to the request artifacts by workflow ID. Each snapshot
contains two or three candidates, totaling $2{,}133$, with the probability,
gain, cost, footprint, and load values used by the gate. The applicable
price functions and curves match the archived source; reconstructed
costs agree with all logged costs within their printing precision.

For each snapshot, an exact dynamic program optimizes over all ordered
subsets under recorded KV headroom. Projected state depends only on the
selected subset's cardinality and total footprint, so one best prefix per
subset suffices. Both oracle and implementation use the archived estimates
and production price curves, project slot and KV occupancy after each
admission, and score $(s_1,\ldots,s_m)$ by
$J_{\mathrm{seq}}=\sum_i U_{s_i}(V_{i-1})$. Seeded audits are disabled.

\renewcommand{\thetable}{\thesection.\arabic{table}}
\begin{table}[H]
\centering
\begin{tabular}{lr}
\toprule
Measure & Repriced / exact oracle \\
\midrule
Oracle-set match & $626/639$ ($98.0\%$) \\
$\sum J_{\mathrm{seq}}/\sum J_{\mathrm{seq}}^\star$ & $99.36\%$ \\
Median $J_{\mathrm{seq}}/J_{\mathrm{seq}}^\star$ & $99.86\%$ \\
5th percentile & $97.35\%$ \\
Minimum & $94.32\%$ \\
\bottomrule
\end{tabular}
\caption{Exact replay of the sequence-aware admission objective.
All rows use the $639$ snapshots with positive optimum; the other $117$
have zero optimum and both methods select nothing.}
\label{tab:admission-replay}
\end{table}

Among the $639$ positive-optimum snapshots, the implemented order attains
the exact sequence value in $281$. Another $345$ select an oracle-optimal
set in a lower-value order; only $13$ differ in selected set. The rule
retains $99.36\%$ of aggregate oracle utility
(Table~\ref{tab:admission-replay}).
All candidates together use at most $16.8\%$ of recorded KV headroom, so
the footprint-cap knapsack case does not arise in this archive. This replay
gives an exact comparison for the logged estimates and revealed candidates;
\S\ref*{sec:eval:c1}--\S\ref*{sec:eval:c1c2-ablation} measure the end-to-end latency effect.